\DeclarePairedDelimiter\ceil{\lceil}{\rceil}
\DeclarePairedDelimiter\floor{\lfloor}{\rfloor}
\newcommand{\R}{\mathbb{R}}
\newcommand{\Z}{\mathbb{Z}}
\theoremstyle{plain}
\newtheorem{theorem}{Theorem}[section]
\newtheorem{lemma}[theorem]{Lemma}
\theoremstyle{definition}
\newtheorem{definition}[theorem]{Definition}
\newtheorem{example}[theorem]{Example}
\newcommand{\MCMF}{\textsc{Mult-Com Max-Flow }}
\newcommand{\RMF}{\textsc{Ratio Max-Flow }}
\newcommand{\IRMF}{\textsc{Int Ratio Max-Flow }}
\newcommand{\MaxFlow}{\textsc{Max-Flow }}
\newcommand{\MinCost}{\textsc{Min-Cost Flow }}
\newcommand{\MCMFs}{\textsc{Mult-Com Max-Flow}}
\newcommand{\RMFs}{\textsc{Ratio Max-Flow}}
\newcommand{\IRMFs}{\textsc{Int Ratio Max-Flow}}
\newcommand{\Mnd}{\mathcal{M}} 
\newcommand{\mytilde}{\raise.17ex\hbox{$\scriptstyle\mathtt{\sim}$}} 
\title{A Tight Max-Flow Min-Cut Duality Theorem for NonLinear Multicommodity Flows}
\author{
  \hspace*{-0.5in} Matthew Broussard\footnote{\href{mailto:matthewbrouss@gmail.com}{matthewbrouss@gmail.com}} \hspace*{1.6in} 
  Bala Krishnamoorthy\footnote{\href{mailto:kbala@wsu.edu}{kbala@wsu.edu}}\\
  TD Bank, N.A., USA 
  \hspace*{1in}  
  Washington State University, USA
  }
\begin{document}

\maketitle

\begin{abstract} \label{Abstract}
The Max-Flow Min-Cut theorem is the classical duality result for the  \MaxFlow problem, which considers flow of a single commodity.
We study a multiple commodity generalization of \MaxFlow in which flows are composed of real-valued $k$-vectors through networks with arc capacities formed by regions in $\mathbb{R}^k$.
Given the absence of a clear notion of ordering in the multicommodity case, we define the generalized max flow as the feasible region of all flow values.

We define a collection of concepts and operations on flows and cuts in the multicommodity setting.
We study the \emph{mutual capacity} of a set of cuts, defined as the set of flows that can pass through all cuts in the set.
We present a method to calculate the mutual capacity of pairs of cuts, and then generalize the same to a method of calculation for arbitrary sets of cuts.
We show that the mutual capacity is exactly the set of feasible flows in the network, and hence is equal to the max flow.
Furthermore, we present a simple class of the multicommodity max flow problem where computations using this tight duality result could run significantly faster than default brute force computations.

We also study more tractable special cases of the multicommodity max flow problem where the objective is to transport a maximum real or integer multiple of a given vector through the network.
We devise an augmenting cycle search algorithm that reduces the optimization problem to one with $m$ constraints in at most $\R^{(m-n+1)k}$ space from one that requires $mn$ constraints in $\R^{mk}$ space for a network with $n$ nodes and $m$ edges.
We present efficient algorithms that compute $\epsilon$-approximations to both the ratio and the integer ratio maximum flow problems.

\end{abstract}

\medskip
\noindent {\bfseries Keywords:}
Multicommodity flows, max-flow min-cut, sheaf theory.

\clearpage
\section{Introduction} \label{sec:Introduction}
Network flows are widely used to model transportation systems of commodities or people subject to capacity constraints.
Ford and Fulkerson \cite{FoFu1956} introduced one of the first motivating examples for network flows in 1956: a system of rail lines connecting cities.
Each city is represented by a node in the network and the rail lines connecting them form the arcs.
Each arc is then labeled with the maximum number of people its associated rail line can transport in, say, a day---this is the capacity of the arc.
Another canonical example is a network of pipes transporting water.
Pipes form the arcs of the network and junctions form the nodes.
The circumference of a pipe determines the flow rate it permits, which is specified as the capacity of the arc associated with the pipe.

In such examples, we often wish to find the maximum amount of a commodity we can transport.
We might wish to find out how many people could possibly travel from New York to Chicago by train, or the maximum flow rate we can achieve through a network of pipes leading to a sink.
This class of problems is referred to as the \MaxFlow problem.
A concept central to the study of \MaxFlow is that of \emph{cuts}, in particular the collections of arcs that, when removed, separate the starting point from our destination.
Also called \emph{disconnecting} sets (of arcs), the concept of cuts is dual to flows.
The \emph{value} $v(D)$ of a disconnecting set $D$ is the sum of the capacities of all (forward) arcs in $D$.
Ford and  Fulkerson \cite{FoFu1956} presented the classical theorem linking cuts to flow values, as well as a method for finding the maximum flow by finding a minimal cut. 
\begin{theorem}[Max-Flow Min-Cut Theorem, Ford and Fulkerson \cite{FoFu1956}] \label{thm:1comMFMC}
  The maximal flow value obtainable in a network $N$ is the minimum of $v(D)$ taken over all disconnecting sets $D$.
\end{theorem}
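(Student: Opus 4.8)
The plan is to prove the two inequalities separately. First I would establish weak duality: for any feasible flow $f$ with value $|f|$ and any disconnecting set $D$, we have $|f| \le v(D)$. Since removing $D$ disconnects the source $s$ from the sink $t$, the set $S$ of nodes still reachable from $s$ in $N \setminus D$ excludes $t$; writing $T$ for its complement, every arc directed from $S$ into $T$ must lie in $D$. Summing the flow-conservation constraints over all nodes of $S$, the net flow out of $S$ equals $|f|$, and this net flow is at most the total capacity of the forward arcs crossing from $S$ to $T$ (since backward contributions are nonnegative and get subtracted), hence at most $v(D)$. Minimizing over all $D$ then gives that the maximal flow value is at most $\min_D v(D)$.

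For the reverse inequality I would exhibit a single flow whose value equals the minimum cut value, via the augmenting path method. Starting from the zero flow, I would repeatedly search the residual network for an augmenting path from $s$ to $t$, namely a path along which flow can be strictly increased by pushing forward on unsaturated arcs and canceling on arcs already carrying positive flow. Each time such a path is found I would augment $f$ by the bottleneck residual capacity along it, strictly increasing $|f|$. For integer or rational capacities this process terminates; for arbitrary real capacities one enforces termination by always augmenting along a shortest residual path, as in the Edmonds--Karp refinement.

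The heart of the argument is extracting a saturated cut at termination. When no augmenting path remains, I would take $S$ to be the set of nodes reachable from $s$ in the residual network and $T = V \setminus S$, so that $t \in T$ by the stopping condition. Every forward arc from $S$ to $T$ must then be saturated---otherwise its head would be residual-reachable, contradicting the definition of $S$---and every backward arc from $T$ to $S$ must carry zero flow, for the analogous reason. Hence the net flow across $(S,T)$ equals exactly the sum of the forward capacities, so $|f| = v(D)$ for the disconnecting set $D$ consisting of those forward arcs. Combined with weak duality this forces $|f| = \min_D v(D)$, proving equality.

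I expect the main obstacle to lie in the strong direction rather than the weak one. Weak duality is a routine summation of conservation constraints, but the strong half depends on two delicate points: guaranteeing that the augmentation process halts (which can fail for arbitrary real capacities without a careful choice of augmenting path), and verifying that the cut harvested at termination is \emph{exactly} saturated. The saturation step is where the reachability definition of $S$ does all the work, and stating it cleanly---that unreachability in the residual network is equivalent to simultaneous forward-saturation and backward-emptiness across the cut---is the crux of the proof.
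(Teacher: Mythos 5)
The paper does not prove this theorem itself---it is stated as the classical result of Ford and Fulkerson, whose proof is exactly the one you outline: weak duality by summing conservation over the source side of a cut, then the augmenting-path argument with the saturated cut extracted from the residual reachability set. Your proposal is correct and matches that standard approach; the one point worth adding is that for irrational capacities you can avoid the termination issue entirely by noting that a maximum flow exists by compactness of the feasible region and that any maximum flow admits no augmenting path, which is closer to Ford and Fulkerson's original treatment than the Edmonds--Karp refinement you invoke.
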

Noting that a disconnecting set with a minimum possible value is automatically a cut, it follows that the maximum flow value is exactly the minimum cut value.
The authors described an algorithm to identify the minimum cut in polynomial time by finding augmenting paths.

One generalization of \MaxFlow is the \MinCost problem \cite{AhMaOr1993} where each arc has both a capacity and a cost for each unit of the commodity transported.
A subset of the nodes are source (supply) nodes with given amounts of the commodity and another subset are sink (demand) nodes with specific required amounts.
The problem is to route the commodity from source nodes to sink nodes such that demands at the sink nodes are met, capacities of arcs are honored, and the total cost is minimized.
A further generalization of \MinCost to handle multiple types of commodities termed multicommodity minimum cost flows has been well studied \cite[Chap.~17]{AhMaOr1993}.

However, the direct generalization of \MaxFlow to the problem with multiple types of commodities without the consideration of costs, supply, and demand has not received as much attention.
While this generalizes the standard max-flow problem \cite{Kr20??final},
the single commodity algorithms may not generalize easily.
It is not clear how to generalize the classical result equating maximum flows with minimum cuts (Theorem \ref{thm:1comMFMC}) to the multicommodity case.
In fact, specifying the boundary of feasible regions for flows of two or more commodities is already challenging.
We focus on this problem (termed \MCMFs) with the goal of finding the set of feasible flows through the a network.

A more tractable version of this problem could seek a maximum multiple of a certain ratio of commodities that could be transported through a network.
This problem is termed \RMFs, and we could also consider a special case termed \IRMF which allows only integer multiples.
As these problems search for a single value, we may be able to modify some of the single-commodity algorithms to search for the maximum (integer) multiple.
Both \RMF and \IRMF could model practical situations.
Suppose a factory manufacturing cars uses tires transported from one depot and chassis transported from another.
In this case, the desired ratio would be $4:1$ tires to chassis, and combinations which are not integer multiples of this ratio do not form a full car and are thus not usable.
This is a situation that \IRMF models.
On the other hand, we could model a dye dispensary which mixes a dark green dye from one part yellow and two parts blue using \RMFs, since we can make partial units of the green dye.

\vspace*{-0.05in}
\subsection{Our Contributions} \label{ssec:Contributions}
\vspace*{-0.03in}

We investigate a general case of \MCMF with $k$ commodities where the capacity restrictions on arcs are defined as regions of $\R^k$.
With a goal toward developing a tight multicommodity max-flow min-cut theorem, we define several new concepts related to flows and cuts in this setting.
As it is not clear how one would define a \emph{maximum} flow in the multicommodity setting, we analyze the set of all feasible flows instead.
We study \emph{pseudoflows} that could violate capacity constraints while honoring flow balance at nodes, and \emph{local flows} over a cut that assign feasible flows for all arcs in the cut (see Section \ref{ssec:defs_flows}).
We also define the concept of \emph{mutual capacity} as the correct generalization of cut capacity in the single commodity case to our general setting (Section \ref{sec:mutual_capacity}).
Our main theorem presents a tight duality result in the multicommodity setting:
{
\renewcommand{\thetheorem}{\ref{thm:MCMFMC}}
  \begin{theorem} 
  The set of feasible flows through a network is precisely the mutual capacity of all cuts in the network.
\end{theorem}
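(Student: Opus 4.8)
The statement is an equality of two sets---the feasible flow values and the mutual capacity of the collection of all cuts---so the plan is to establish containment in both directions. Throughout, let $v \in \R^k$ denote a candidate flow value, the net vector transported from source to sink, and recall that each cut determines a partition of the nodes whose crossing arcs carry the flow between the two sides. The forward containment, that every feasible flow value lies in the mutual capacity, is the routine direction. Given a feasible flow $f$ of value $v$, flow balance at the interior nodes forces the net vector crossing any cut $D$ to equal $v$; since $f$ respects every arc capacity, its restriction to the arcs of $D$ is a local flow over $D$ realizing $v$. Hence $v$ passes through $D$, and as $D$ is arbitrary, $v$ lies in the intersection that defines the mutual capacity of all cuts.

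The reverse containment---that a value passing through every cut is globally feasible---is the substantive half, and I would prove its contrapositive: assuming $v$ is not a feasible flow value, I exhibit a single cut through which $v$ cannot pass. The natural starting point is a pseudoflow of value $v$, which exists whenever $v$ is a legitimate source--sink value, since pseudoflows honor balance while ignoring capacities. Infeasibility of $v$ then means that no redistribution of flow around cycles can pull every arc's vector into its capacity region. When such a repair process stalls, the arcs whose vectors remain outside their capacity regions ought to separate the source side from the sink side, and the residual obstruction along them should be repackaged, by means of the mutual-capacity computation for a single cut established earlier, into a certificate that $v$ fails to pass the corresponding cut. The two containments together then yield the asserted equality.

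The main obstacle is exactly this localization step. In the single-commodity setting the residual graph produces a saturated cut automatically, and augmenting moves make monotone progress because scalar capacities are totally ordered. With capacity regions in $\R^k$ there is no such order, so ``saturation'' and ``progress'' must be recast: an arc is obstructed not when a scalar is met with equality but when its flow vector lies on a supporting boundary of its capacity region in a direction aligned with the cut. The crux is to show that these local obstructions, accumulated over the blocked arcs, are jointly incompatible with $v$ in the sense of the single-cut mutual-capacity computation---that global infeasibility is always witnessed by one cut rather than only by the combined behavior of several. Establishing this single-cut witness, presumably by invoking the earlier mutual-capacity calculation together with a separation argument on the capacity regions, is where the real difficulty lies.
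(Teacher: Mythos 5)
Your forward containment matches the paper's: restrict a feasible flow to each cut, note the restrictions are compatible local flows, and glue. The reverse direction, however, rests on a misreading of what ``mutual capacity of all cuts'' means, and the plan as stated cannot be completed. You interpret membership in the mutual capacity as ``$v$ passes through every cut,'' i.e.\ $v$ lies in each cut's capacity separately, and you propose to prove the contrapositive by exhibiting a \emph{single} cut that blocks an infeasible $v$. That would establish that the \emph{total} capacity (the plain intersection of the cuts' Minkowski-sum capacities) equals the feasible region --- a claim the paper itself refutes: in Example~\ref{duality_gap_example} the value $(2,2)$ lies in every cut's capacity yet is infeasible, and in Example~\ref{ex:Gap_example} the value $(1,2)$ lies even in every \emph{pairwise} capacity yet is infeasible. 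So no single-cut (indeed, no two-cut) witness of infeasibility exists in general, and the ``localization step'' you flag as the crux is not merely difficult but impossible; the residual-graph analogy you are reaching for is exactly what fails to survive the loss of a total order on capacities.

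The theorem is saved by the stronger definition of mutual capacity: an element of $U_M(\mathcal{C})$ for $\mathcal{C}$ the set of \emph{all} cuts is not just a value lying in every cut's capacity, but a single consistent assignment of a capacity-respecting vector to every arc (a gluing of compatible local flows, one per cut, as in Lemmas~\ref{local_gluing} and~\ref{lem:mutualcap}) whose net value over each cut is the same. With that reading the hard direction needs no augmenting or separation machinery at all: capacity constraints hold by definition, and conservation at a node $\nu$ follows by comparing the two cuts that differ only in whether $\nu$ sits on the source or the sink side --- equality of their net values cancels the common arcs and forces inflow to equal outflow at $\nu$, commodity by commodity. That two-cut comparison, applied at every node, is the entirety of the paper's argument for the reverse containment; you should replace your contrapositive scheme with it.
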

}
\noindent Furthermore, our proof is a constructive algorithm, so calculation of the max flow, which generalizes to the feasible region of flows, gives us the set of all flows that have a particular flow value in the feasible region.

While we address a less general version of the max flow problem than Krishnan \cite{Kr20??final} as we restrict ourselves to subsets of $\mathbb{R}$ for our fields of coefficients, we use the restriction to give more detailed descriptions of the behavior of network flows.
We present a method for calculating the max flow for any network with capacities in $\mathbb{R}^k$. 
Additionally, we prove our results from first principles rather than using the machinery of category theory as done by Krishnan \cite{Kr20??final}.

We use our duality result to construct an algorithm to solve the general \MCMF problem.
In addition, we present a class of instances of the problem on which our algorithm calculates the max flow region exponentially faster than the default brute force approach.

Motivated by the work of Leighton and Rao \cite{LeRa1999}, we also consider more tractable cases of the \MCMF problem that aim to send a maximum multiple of a particular ratio of commodities.
We present efficient algorithms for two problems that take advantage of this optimization step.
We show that when the capacities of the network are bounded and reducible, these algorithms return accurate results and also terminate in finite time.
Furthermore, these algorithms reduce both the size 
and the dimension of the optimization approach by using the structure inherent in the network.
We improve the complexity of the optimization problem from requiring $mn$ constraints in $\mathbb{R}^{mk}$ to $m$ constraints in at most $\mathbb{R}^{(m-n+1)k}$ when there are $n$ nodes and $m$ arcs.

\subsection{Previous Work} \label{ssec:Previous}

While multicommodity \emph{minimum cost} flow problems are well studied in the context of linear and integer programming \cite[Chap.~17]{AhMaOr1993}, multicommodity maximum flow problems have not received as much attention.
Leighton and Rao \cite{LeRa1999} studied a version of the multicommodity max flow problem  where a network with scalar capacities on each arc is given for multiple commodities, each with its own source and sink, as well as a desired ratio of commodities.
(In contrast, motivated by Krishnan's work \cite{Kr20??final}, we consider more general classes of capacities.)
The max flow problem here was to maximize the ratio transported from source to sink subject to the restriction that the sum of all commodities through an arc cannot exceed its scalar capacity.
They also defined an analogue to the min cut called the sparsest cut, which is a cut that minimizes the ratio of the capacity of the cut to the sum of the demands for which the source and sink are separated by the cut.

They summarized several results for this problem, including ones from previous work, showing that the max flow and sparsest cut are equal for two-commodity flows, that the max flow and sparsest cut are equal provided that the graph constructed with edges $(s_i,t_i)$ for each commodity $i$ has no set of three disjoint edges and no triangle along with a disjoint edge, and further information on special cases.
Their main result was a general proof that the max flow is within a $\Theta (\log_2 n)$-factor of the sparsest cut.
They also developed an algorithm for determining an approximate max flow.

Ghrist and Krishnan \cite{GhKr2013} and then Krishnan \cite{Kr20??final} used sheaf theory to develop a notion of directed homology and cohomology.
This framework allowed the characterization of flows and cuts as directed homology and cohomology classes.
The ``min cut'' was defined as the intersection over all cuts of the Minkowski sum of arc capacities in cuts, and the max flow was replaced by the feasible region of flows.
Poincar\'e duality was then used to establish a general max-flow min-cut theorem.
However, the general theorem presented by Ghrist and Krishnan \cite{GhKr2013} had duality gaps in that the region given as the set of feasible flows could include some flow values that cannot be achieved.

The manuscript titled \emph{Flow-Cut Dualities for Sheaves on Graphs} \cite{Kr20??final} ends with two cases for which the max flow is equal to the min cut.
If all paths from source to sink pass through a minimum, i.e., have an arc which has a capacity that is a subset of each capacity along a path, or if the capacities come from a lattice-ordered semi-module, then the min cut and feasible region for flows are equal.

However, this approach does not give a general tractable solution to the multicommodity flow problem since the model given for multicommodity flow, which uses Minkowski sums and intersections, does not form a lattice order (in general, $(A +_M B) \bigcap (A +_M C) \neq A +_M (B \bigcap C)$).
Furthermore, there exist networks where some paths do not pass through a minimum, and indeed transforming the bounds on arcs so as to make all paths pass through a minimum modifies the real network capacity.
Our approach develops a tight bound on max flow using arguments from first principles.
With the bound established, we show how our results fit into Krishnan's framework by modifying certain definitions. 

Even, Itai, and Shamir \cite{EvItSh1976} studied timetable and integral multicommodity flow problems, showing both directed and undirected versions of the multicommodity flow problems are NP-complete.
In the undirected case, the flows are assigned a direction ($\pm$) even though the edges themselves are undirected.
The capacity functions considered in this work were restricted to natural numbers as upper bounds on the simple sums of flows of all commodities.
Our multicommodity flow problem allows highly general nonlinear capacity functions, and hence includes the version studied by Even, Itai, and Shamir as a special case.
As such, our multicommodity flow problem turns out to be NP-complete as well.

\section{Problems and Definitions} \label{sec:definitions}
We first formally define the three multicommodity max-flow problems of interest.
We then show that the main problem turns out to be NP-complete, as it covers as special cases a class of 2-commodity flow problems that are known to be NP-complete.
We then list various definitions related to networks and flows that we use throughout the paper.
Several of them are standard concepts defined in previous work, but are listed here for the sake of completeness.
Definitions without citations are our original contributions.

Following standard convention \cite{AhMaOr1993}, we let $n$ and $m$ denote the number of nodes and arcs in the network, and $k$ denote the number of commodities.
We denote the capacity of arc $a$ by $C_a$, and will drop the subscript when the arc is clear from context.
Furthermore $\mathcal{C}$ denotes a set of cuts, and is also referred to as a cut set.

\begin{definition}
  [Multicommodity Max Flow \cite{Kr20??final}] Given a network with capacities in $k$ commodities specified as subsets of $\R^k$, \MCMF aims to find the set of feasible flows through the network.
\end{definition}

\begin{definition}
  [Ratio Max Flow \cite{LeRa1999}] Given a network with real capacities in $k$ commodities, \RMF aims to find the flow that is the maximum multiple of a desired ratio of the $k$ commodities.
\end{definition}

\begin{definition}
  [Integer Ratio Max Flow] Given a network with real capacities in $k$ commodities, \IRMF aims to find the flow that is the maximum integer multiple of a desired ratio of the $k$ commodities.
\end{definition}

\medskip
Note that the arc capacities we consider (see Definition \ref{def:capacity}) are fairly general---in particular, we do not assume they are convex, or even compact, subsets of $\R^k$.
Even, Itai, and Shamir studied versions of the integer multicommodity flow problem \cite{EvItSh1976}, and their connections to certain timetable scheduling problems.
They showed these classes of problems are NP-complete for both directed and undirected networks.
These problems turn out to be special cases of \MCMFs.

More precisely, \textsc{D2CIF} is the directed 2-commodity integer flow problem where the directed network has a pair $(s_i,t_i)$ of origin and destination nodes for $i=1,2$, and integer \emph{requirements} $R_i \geq 0$  on the amount of outflow of commodity $i$ from $s_i$.
The two source nodes $s_1, s_2$ could be identical, and so could be the destination nodes $t_1, t_2$.
The capacity restrictions on the edges are nonnegative integers that present upper bounds on the simple sum of flows of both capacities.
The goal of \textsc{D2CIF} is to decide if a nonnegative integer feasible flow exists that satisfies the requirements.
The related \textsc{U2CIF} considers the same problem on \emph{undirected} networks, but with the flows still given a positive or negative direction.
The capacity restrictions are now applied to the sum of absolute values of directed flows of the two commodities.

Given an instance of \textsc{D2CIF} (or \textsc{U2CIF}), we can create an equivalent instance of \MCMF as follows.
We add extra nodes $s,s',t$, and edges $(s,s'), (s',s_1), (s',s_2)$, as well as  $(t_1,t), (t_2,t)$ (directed or undirected as needed).
The capacity restriction for edge $(s,s')$ is set as $(f_1,f_2) \geq (R_1,R_2)$ and $f_1, f_2 \in \Z$, the set of integers, i.e., the given \emph{requirements} are set as elementwise lower bounds on the flow of the two commodities that are also required to be integral.
All other added arcs are set as uncapacitated.
It follows that the answer to \textsc{D2CIF} is \texttt{Yes} if and only if a solution for the instance of \MCMF exists, i.e., the set of feasible flows is non-empty.

\medskip
Since \textsc{D2CIF} and \textsc{U2CIF} were shown to be NP-complete \cite{EvItSh1976}, it turns out that \MCMF is NP-complete as well.
While we focus on instances of \MCMF where the capacities are fairly general, specific instances may admit efficient solution approaches, e.g., when the capacities are all closed hyperrectangles specifying simple upper bounds on flows of individual commodities.

\subsection{Definitions on Networks} \label{ssec:defs_networks}

\begin{definition}[Single commodity Network \cite{AhMaOr1993}]
  A single commodity network $N$ is a directed graph with node set $V$ and arc set $E$ such that each arc $a \in E$ has exactly one capacity $C_a$.
  These capacities are regions in $\mathbb{R}$ ranging from  $0$  to a maximum value.
  We denote this region by noting only the maximum value.
  The network has a distinguished source node $s$ and terminus node $t$.
\end{definition}

\begin{definition}[$k$-commodity Network]
  A $k$-commodity network $N$ for $k \geq 2$ is a directed graph with node set $V$ and arc set $E$ such that each arc $a \in E$ has exactly one capacity $C_a$.
  These capacities are regions embedded in $\mathbb{R}^k$, where each dimension corresponds to a particular commodity.
  Arcs are directed as a way to fix an orientation with regards to the capacity region, i.e., flow in the direction of the arc is considered positive flow along the arc and flow in the opposite direction is considered negative.
  We can reverse the orientation of the arc if we multiply all vectors in the arc's capacity by $-1$.
  The network has a distinguished source node $s$ and terminus node $t$.
\end{definition}

Note that it is possible for flow of one commodity to be in the positive direction while another commodity flows in the negative direction.
For instance, suppose we have the (oriented) arc $(a,b)$ in the network and its capacity includes the vector $(2,-3)$.
Hence the network can transport two units of commodity 1 from $a$ to $b$ so long as it also transports three units of commodity 2 from $b$ to $a$.
We could also consider the arc in the orientation $(b,a)$, whose capacity includes the vector $(-2,3)$.
This represents the same flow in the network, i.e., two units of commodity 1 from $a$ to $b$ and three units of commodity 2 from $b$ to $a$.

\begin{definition}[Opposite Orientation]
  An arc $(a,b)$ with capacity $C \subseteq \mathbb{R}^k$ is associated with the arc $(b,a)$ with capacity $-C \subseteq \mathbb{R}^k$.
  These two arcs are said to have opposite orientations.
\end{definition}

Since our focus is on multicommodity flows, we use the terms ``network'' and ``$k$-commodity network'' interchangeably.
We say ``single commodity network'' when referring to cases where there is only one commodity.

\begin{example} \label{ex:MultiComNetworkEx}
  Consider a $2$-commodity network with $V=\{s, v_2, t\}$, $E = \{ (s,v_2), (v_2,t), (s,t)\}$, and capacities on the three arcs as shown in Figure \ref{fig:MultiComNetworkEx}.
  \begin{figure}[htp!]       
    \centering
    \includegraphics[scale=1.25]{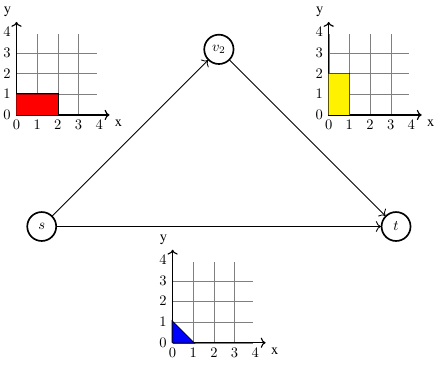}
    \caption{
        Example of a 2-commodity network.
        Capacities of the arcs are the regions shown in red, yellow, and blue.}
    \label{fig:MultiComNetworkEx}
  \end{figure}
\end{example}

Note that we have chosen the arc capacities in most of the examples as simple, yet nontrivial, sets in order to keep the examples accessible and yet insightful.
For instance, the capacities of the arcs in the network in Example \ref{ex:MultiComNetworkEx} (and in the related Example \ref{duality_gap_example}) are simplex convex polygons.
Capacities in Figures \ref{fig:MCMFLocalFlow}--\ref{fig:MCMFGlueing} are finite sets of 2D integer points (from $\Z^2_{\geq 0}$).
And in Example \ref{ex:Gap_example}, the capacity of arc $(2,3)$ is specified as a disjunction.
At the same time, the definitions and results apply to most general networks---no particular structure is assumed to be satisfied by the arc capacities unless specified otherwise.

\begin{definition}[Enhanced Network \cite{Kr20??final}]
  An enhanced network $N^E$ is a network $N$ together with an arc $e$ directed from $t$ to $s$ with the capacity constraints $x_i \geq 0$ for all commodities $i$, where $x_i$ is the flow value of commodity $i$ in the arc $e$.
\end{definition}
\noindent We assume that all networks are equipped with this arc $e$ by default, and use the term ``network'' to refer to enhanced networks.
Likewise, we omit $e$ from diagrams and examples for the brevity of illustration.
We also denote the arc set of the enhanced network by $E$.

\begin{definition}[Boundary Functions \cite{Kr20??final}]
  The functions $\partial_-, \partial_+: E \to V$ provide the tail or head of an arc, respectively.
  That is, $\partial_-(u,v)=u$ and $\partial_+(u,v)=v$ for all arcs $(u,v) \in E$.
\end{definition}
\noindent Note we often use $a$ or $e$ to denote arcs rather than listing them as explicit ordered pairs.

\begin{definition}[Capacity \cite{Kr20??final}] \label{def:capacity}
  The capacity of an arc is defined as the set of vectors in $\mathbb{R}^k$ which are permissible for that arc, i.e.,
  all possible amounts of the $k$ commodities that we can transport across that arc at the same time.
\end{definition}

Capacities can be quite general sets of vectors in $\mathbb{R}^k$.
However, we restrict ourselves to two-dimensional polytopes or integer points within the same in our example networks for ease of calculation and visualization.
Our results do hold for the most general forms of capacity unless noted otherwise.

\begin{definition}[Cut] \label{def:cut}
  A \emph{cut} in a $k$-commodity network is a partition of nodes in $V$ into two sets $S$ and $T$ such that $s \in S$ and $t \in T$.
  We associate with each cut all arcs that originate at nodes in $S$ and terminate at nodes in $T$, i.e., forward arcs, in their original orientation.
  Similarly, we associate all arcs that originate at nodes in $T$ and terminate at nodes in $S$, i.e., backward arcs, in their opposite orientation.
  The capacity of the cut is $v(\{a_i\})$ where $\{a_i\}$ is the set of appropriately oriented arcs in the cut.
\end{definition}

Note that we study only $s,t$-cuts, i.e., partitions that separate $s$ and $t$.
More generally, any partition of $V$ may be referred to as a cut in some literature \cite{AhMaOr1993}.
We further note that, unlike in the single commodity case, forward \emph{and} backward arcs can contribute to the capacity of a cut.
Also note that $e$ is a backward arc over all cuts in our setting.
 
\begin{definition}[Arc Set]
  The \emph{arc set} $E(c)$ of a cut $c$ is the collection of forward and backward arcs over the cut, i.e., arcs from $S$ to $T$ or vice-versa, along with their orientations.
  The arc set of a set of cuts $\mathcal{C}$ is similarly defined as $E(\mathcal{C}) = \cup_{c \in \mathcal{C}} E(c)$.
\end{definition}

\begin{definition}[Value Function]
  We denote the capacity of a set of arcs $\{a_i\}_{i\in I}$ by $v(\{a_i\})$ and define it as the Minkowski sum of the capacities of the individual arcs.
  We refer to $v$ as the \emph{value function}.
  When $\{a_i\}$ is the arc set of a cut and the arcs are oriented from $s$ to $t$, we call $v(\{a_i\})$ the capacity of the cut.
\end{definition}

In practice, we exclude the capacity of the arc $e$ in calculations of a network's capacity.
Recall that $e$ by definition cannot transport commodities from $s$ to $t$ as it takes only nonnegative values from $t$ to $s$.
Hence the inclusion of $e$ in the value function does not give any useful information on the network's capacity.

\begin{definition}[Total Capacity \cite{GhKr2013}]
  The \emph{total capacity} of a set of cuts is the intersection of their capacities.
\end{definition}

\begin{definition}[Mutual Capacity]
  The \emph{mutual capacity} $U_m(\mathcal{C})$ of a set of cuts $\mathcal{C}$ is the set of flow values $f$ for which there exists an assignment of flow values to the arcs in the arc sets of $\mathcal{C}$ that respect the capacity constraints and give a total value of $f$ over each cut in $\mathcal{C}$.
\end{definition}

Note that the mutual capacity must be a subset of total capacity, but the two need not be equal.
See Example \ref{duality_gap_example} for an illustration.

\begin{definition}[Reducible Capacity]
  A capacity $C$ for an arc is \emph{reducible} if for every flow value $f \in C$ and every flow value $0 \leq f' \leq f$ (element wise), $f' \in C$.
\end{definition}

Working with reducible capacities gives us the ability to speed up searches for a max flow since we can iteratively search increasing flows.
We can then terminate our search when we find a flow with no realization.

\subsection{Flows and Flow-like Objects} \label{ssec:defs_flows}

\begin{definition}[Flow \cite{Kr20??final}]
  A \emph{flow} on a network is a function $\phi: E \rightarrow \mathbb{R}^k$ subject to the following restrictions:
\begin{itemize}[itemsep=-0.01in]
    \item conservation: for $v \in V$, $\Sigma_{a\in \partial_-^{-1}(v)} \phi(a)=\Sigma_{a\in \partial_+^{-1}(v)}\phi(a)$;
    \item capacity constraints: for $a\in E$, $\phi(a) \in C_a$ and $\phi_e \geq 0$ for the edge $e=(t,s)$.
\end{itemize}
\end{definition}
\noindent Note that $\phi(v)=0$ for all nodes $v$ in an enhanced network $N^E$.


\begin{example} \label{ex:MultiComFlow}
  Figure \ref{fig:MCFlowEx} shows a flow in the 2-commodity network presented in Example \ref{ex:MultiComNetworkEx}.
  \begin{figure}[htp!]
  \centering
  \includegraphics[scale=1.25]{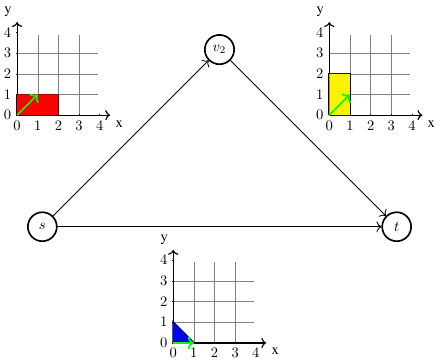}
  \caption{Example of a multicommodity flow in the network shown in Figure \ref{fig:MultiComNetworkEx}.
  The green vectors denote flow values for the two commodities ($x,y$) on each arc.}
  \label{fig:MCFlowEx}
  \end{figure}
\end{example}

\begin{definition} [Flow Value on a Network \cite{Kr20??final}]
  The \emph{flow value} of a given flow \emph{on a network} is the net value assigned to the arcs in a cut set for that network.
  Equivalently, a flow on an enhanced network has a value equal to the value assigned to arc $e$.
\end{definition}

We prove in Lemma \ref{flow_equality} that a flow takes the same total value across each cut, so the choice of the cut in the above definition is immaterial.
For example, the flow value of flow shown in Figure \ref{fig:MCFlowEx} is the net outflow of $s$, which is $(1,1)+(1,0)=(2,1)$.

\begin{definition} [Flow Value on an Arc \cite{Kr20??final}] \label{def:FlowValue}
  The \emph{flow value on an arc} is the value assigned to that arc by a flow.
\end{definition}

We define a pseudoflow by loosening the capacity constraints rather than the conservation constraints.

\begin{definition} [Pseudoflow]
  A \emph{pseudoflow} on a network is a function $\phi: E \rightarrow \mathbb{R}^k$ subject to flow conservation, i.e., for $v \in V $, $\Sigma_{a \in \partial_-^{-1}(v)} \phi(a) = \Sigma_{a\in \partial_+^{-1}(v)} \phi(a)$.
\end{definition}

A pseudoflow can be considered as an ideal transportation situation where transports have no limits.
In our approach, we will find realizations of particular flow values by transforming pseudoflows into flows.

We are also interested in assigning values to a subset of arcs.
Our first definition gives terminology to discuss a particular assignment of flow values to a particular cut.

\begin{definition}[Local Flow]
   Given a set of cuts $\mathcal{C}$, a \emph{local flow} $\phi_L : E(\mathcal{C}) \rightarrow \mathbb{R}^k$ is an assignment of values to the arcs in the cut set $E(\mathcal{C})$ subject to the following restrictions:
   \begin{itemize}
      \item capacity constraints: for $a \in E(\mathcal{C})$, $\phi_L(a) \in C_a$;
      \item flow value: for $c$ in $\mathcal{C}$, $\phi_L(e)$ is the sum $\sum_{a\in F}\phi_L(a)-\sum_{a\in B} \phi_L(a)$, where $F$ and $B$ are the sets of forward and backward arcs in $c$, respectively.
   \end{itemize}
   The value assigned to $e$ by $\phi_L$ is called the \emph{flow value} of the local flow.
\end{definition}

Note that since $e \in E(\mathcal{C})$ because $e$ is in all cuts, $e$ has a value assigned to it by any local flow.
Also, a local flow assigns values to the arcs in a set of cuts in such a way that net flow over each cut is equal.
When reading the following definitions, recall that each local flow is defined on its given set of cuts.

\begin{figure}[ht!]
    \centering
    \includegraphics[scale=1.1]{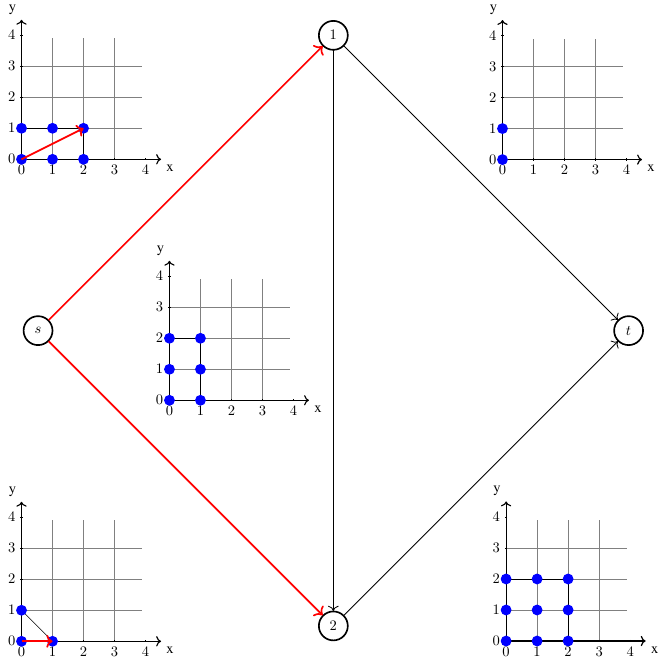}
    \caption{Example of a local flow over a cut (in red) with arc set $\{(s,1),(s,2)\}$.
    }
    \label{fig:MCMFLocalFlow}
\end{figure}

\begin{definition}[Compatible]
  A set of local flows are \emph{compatible} if for each arc $a$ in the union  of arc sets of the local flows, $\phi_L(a)$  is equal for each local flow $\phi_L$ that contains $a$ in its arc set. 
\end{definition}

Note that compatibility requires the local flows to assign the same value to $e$, so a set of local flows must each have the same flow value to be compatible.
For example, Figure \ref{fig:MCMFLocalFlow2} presents a local flow compatible with that in Figure \ref{fig:MCMFLocalFlow}.

\begin{figure}[ht!]
    \centering
    \includegraphics[scale=1.1]{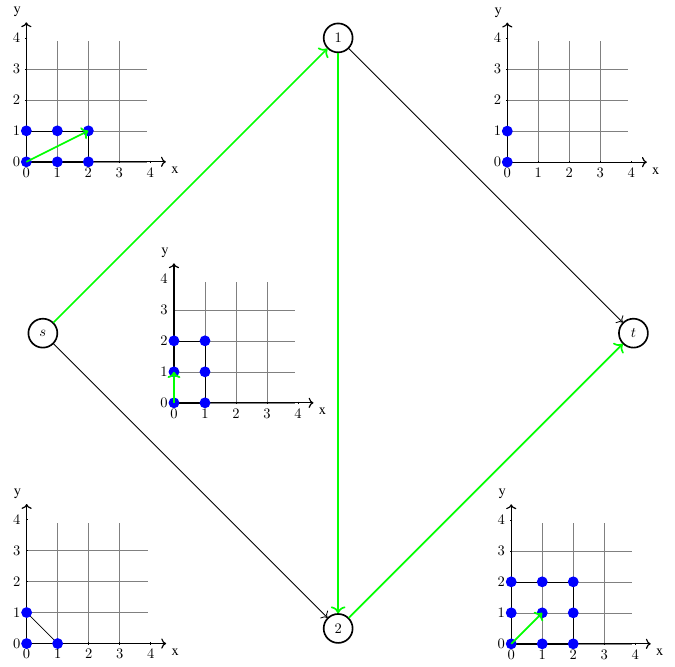}
    \caption{A second example of a local flow on the same network used in Figure \ref{fig:MCMFLocalFlow}.
    Arc set of the single cut is shown in green.
    Note that arc $(1,2)$ is a backward arc of this cut, and hence subtracts from the local flow's flow value.}
    \label{fig:MCMFLocalFlow2}
\end{figure}

\begin{definition} [Gluing] \label{def:Gluing}
  The gluing $G$ of a compatible collection of local flows $(\{\phi_L\})$ is another local flow.
  The arc set of $G$ is the union of arc sets of local flows in $\{\phi_L\}$, and likewise the cut set of $G$ is the union of the cut sets of the local flows.
  For each arc $a$ in the arc set of $G$, it holds that $G(a)=\phi_L(a)$ for any $\phi_L$ that has $a$ in its arc set.
  Since all $\phi_L$ are compatible, this is a unique definition of $G(a)$. 
\end{definition}

For instance, the local flows in Figures \ref{fig:MCMFLocalFlow} and \ref{fig:MCMFLocalFlow2} can be  glued together as they are compatible local flows.
This gluing is a local flow over a pair of cuts shown in Figure \ref{fig:MCMFGlueing}.

A gluing is local flow over a broad set of cuts built from local flows over smaller subsets of those cuts in a way that preserves the assignments given by the local flows that are glued together.
We require consistency so that no two local flows try to assign different values to the same arc.
Further, since $e$ takes a value under each local flow, we also ensure that the local flows have the same net value over every cut in the cut set.
We will build global flows as gluings of local flows that fit together in this manner.

\begin{figure}[ht!]
    \centering
    \includegraphics[scale=1.1]{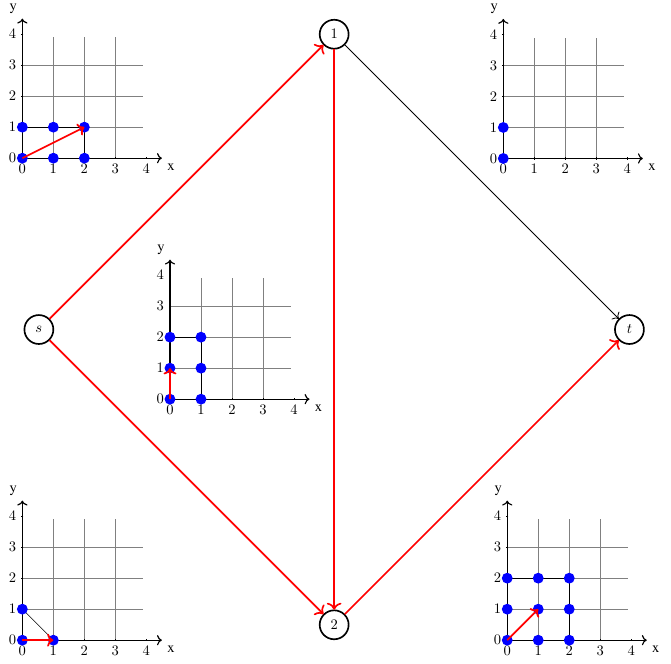}
    \caption{
        A local flow over a pair of cuts.
        This flow is obtained by gluing (see Definition \ref{def:Gluing}) together the flows in Figures \ref{fig:MCMFLocalFlow} and \ref{fig:MCMFLocalFlow2}.
    }
    \label{fig:MCMFGlueing}
\end{figure}

\subsection{Operations}  \label{ssec:defsops}

We define the operations we use to consider and compare capacities of arcs, which are defined as regions of $k$-vectors.
The basic operation is the Minkowski sum.

\begin{definition} [Minkowski Sum \cite{DeTe2014}]
  The \emph{Minkowski sum} $C$ of two sets of vectors $A$ and $B$ (denoted by $+_M$) is the set of vectors that can be written as the sum of a vector from $A$ and a vector from $B$.
\end{definition}

It makes sense to consider the capacity of a cut to be the Minkowski sum of the capacities of its arcs.
After all, we can transport $\mathbf{x}=(x_1, x_2, \dots, x_k)$ goods over a cut if and only if we can split it up into one portion for each arc in the cut which fits through that arc.
That is, we can transport $\mathbf{x}$ only if it is the element-wise vector sum of the capacities of arcs in the cut, which is the Minkowski sum.

However, when we try to compare capacities for different cuts, the Minkowski sum may give misleading results if a particular flow value requires an arc $a$ to take one value for one of the cuts it is involved with for that cut to reach a value $f$, but another cut requires a different value on $a$ to reach $f$ (see Example \ref{duality_gap_example}).

To address this discrepancy, we introduce an extension of the Minkowski sum that not only tracks the total flow value across a cut as the standard Minkowski sum does, but also records the arcs contributing to the flow and the flow values assigned to each arc.
Subsequently we define an intersection that accurately detects and discards such ``double booking''.

\begin{definition}[Enriched Minkowski Sum] \label{def:EnrichMsum}
  The \emph{enriched Minkowski sum} of the capacities of a collection $\{a_i\}$ of arcs in a cut is the Minkowski sum $+_{M}:\prod C_{a_i}\rightarrow \mathbb{R}^k$ of those capacities along with the level  set under $+_M$ in $\prod C_{a_i}$ of each vector in the Minkowski sum.
  That is, we find all vectors that can be produced as an element-wise vector sum from the arc capacities.
  We further include for each such vector the set of all legal vector assignments to arcs in the cut which produce that vector.
\end{definition}

For instance, in Figure \ref{fig:cut_capacities} the enriched Minkowski sum of each point on the RHS of the two equations (corresponding to the two cuts) will have each vector $v$ associated with the set of all pairs of vectors from $a_1$ and $a_3$ or $a_2$ and $a_3$, respectively, that sum to $v$.
Note that the Enriched Minkowski sum gives the set of all local flows over a cut indexed by their flow values.

We consider a generalization of intersection for enriched Minkowski sums of cuts.
This generalization finds flow values in the capacities of a collection of cuts, but only so long as there is a consistent assignment of flow values to arcs which produces the desired flow value across each compared cut---i.e., provided no cuts require double (or more) booking an arc.
We show that taking this intersection over the enriched Minkowski sums of all cuts returns the region of feasible flows along with all realizations of flows that produce any desired flow value.

\begin{definition} [Enriched Intersection] \label{def:EnrichInt}
  The \emph{enriched intersection} of enriched Minkowski sums over a collection of cuts $\mathcal{C}$ is the intersection of the Minkowski sum of capacities over each cut, along with, for each vector $v$ in the intersection of Minkowski sums, the set of gluings of consistent local flows with flow value $v$ for each $c\in \mathcal{C}$.
\end{definition}

We will use intersection in place of enriched intersection and Minkowski sum (or even simply sum) in place of enriched Minkowski when the context is clear.
Likewise, when it is clear that we are taking the sum of a set of vectors, we will use $+$ rather than $+_M$ for the Minkowski sum.
We also use the $\cap$ symbol for the general intersection when it is clear that we are taking the general intersection of enriched Minkowski sums.

\section{Mutual Capacity: Towards an MFMC Result} \label{sec:mutual_capacity}

\newsavebox{\myboxa}
    \begin{lrbox}{\myboxa}{%
        \begin{tikzpicture}[scale=0.75,
roundnode/.style={circle, draw=green!30, fill=green!5, very thick, minimum size=7mm, scale=.7},
roundnode2/.style={circle, draw=black!100, fill=green!0, thick, minimum size=7mm, scale=.7},
squarednode/.style={rectangle, draw=red!60, fill=red!5, very thick, minimum size=5mm},
 el/.style = {inner sep=2pt, align=left, scale=.7},
]
        \centering

        \draw[step=1cm,gray,very thin] (0,0) grid (3.9,3.9);

        \foreach \x in {0,1,2,3,4}
            \draw (\x cm,1pt) -- (\x cm,-1pt) node[anchor=north] {$\x$};
        \foreach \y in {0,1,2,3,4}
            \draw (1pt,\y cm) -- (-1pt,\y cm) node[anchor=east] {$\y$};

        \draw[thick,->] (0,0) -- (4.5,0) node[anchor=north west] {x};
        \draw[thick,->] (0,0) -- (0,4.5) node[anchor=south east] {y};

        \draw[very thick] (0,0)--(1,0)--(0,1)--(0,0);

        \fill[blue] (0,0)--(1,0)--(0,1)--(0,0);

        \node at (5,2) {\large $+$};
        \end{tikzpicture}
    }
    \end{lrbox}

\newsavebox{\myboxb}
\begin{lrbox}{\myboxb}{%
\begin{tikzpicture}[scale=0.75,
roundnode/.style={circle, draw=green!30, fill=green!5, very thick, minimum size=7mm, scale=.7},
roundnode2/.style={circle, draw=black!100, fill=green!0, thick, minimum size=7mm, scale=.7},
squarednode/.style={rectangle, draw=red!60, fill=red!5, very thick, minimum size=5mm},
 el/.style = {inner sep=2pt, align=left, scale=.7},
]
\centering
\draw[step=1cm,gray,very thin] (0,0) grid (3.9,3.9);

\foreach \x in {0,1,2,3,4}
   \draw (\x cm,1pt) -- (\x cm,-1pt) node[anchor=north] {$\x$};
\foreach \y in {0,1,2,3,4}
    \draw (1pt,\y cm) -- (-1pt,\y cm) node[anchor=east] {$\y$};

\draw[thick,->] (0,0) -- (4.5,0) node[anchor=north west] {x};
\draw[thick,->] (0,0) -- (0,4.5) node[anchor=south east] {y};

\draw[very thick] (0,0)--(1,0)--(1,1)--(0,1)--(0,0);

\fill[cyan] (0,0)--(1,0)--(1,1)--(0,1)--(0,0);

\node at (5,2) {\large $=$};
\end{tikzpicture}
}
\end{lrbox}

\newsavebox{\myboxc}
\begin{lrbox}{\myboxc}{%
\begin{tikzpicture}[scale=0.75,
roundnode/.style={circle, draw=green!30, fill=green!5, very thick, minimum size=7mm, scale=.7},
roundnode2/.style={circle, draw=black!100, fill=green!0, thick, minimum size=7mm, scale=.7},
squarednode/.style={rectangle, draw=red!60, fill=red!5, very thick, minimum size=5mm},
 el/.style = {inner sep=2pt, align=left, scale=.7},
]
\centering

\draw[step=1cm,gray,very thin] (0,0) grid (3.9,3.9);

\foreach \x in {0,1,2,3,4}
   \draw (\x cm,1pt) -- (\x cm,-1pt) node[anchor=north] {$\x$};
\foreach \y in {0,1,2,3,4}
    \draw (1pt,\y cm) -- (-1pt,\y cm) node[anchor=east] {$\y$};

\draw[thick,->] (0,0) -- (4.5,0) node[anchor=north west] {x};
\draw[thick,->] (0,0) -- (0,4.5) node[anchor=south east] {y};

\draw[very thick] (0,0)--(2,0)--(2,1)--(1,2)--(0,2)--(0,0);

\fill[magenta] (0,0)--(2,0)--(2,1)--(1,2)--(0,2)--(0,0);
\end{tikzpicture}
}
\end{lrbox}

\newsavebox{\myboxd}
\begin{lrbox}{\myboxd}{%
\begin{tikzpicture}[scale=0.75,
roundnode/.style={circle, draw=green!30, fill=green!5, very thick, minimum size=7mm, scale=.7},
roundnode2/.style={circle, draw=black!100, fill=green!0, thick, minimum size=7mm, scale=.7},
squarednode/.style={rectangle, draw=red!60, fill=red!5, very thick, minimum size=5mm},
 el/.style = {inner sep=2pt, align=left, scale=.7},
]
\centering

\draw[step=1cm,gray,very thin] (0,0) grid (3.9,3.9);

\foreach \x in {0,1,2,3,4}
   \draw (\x cm,1pt) -- (\x cm,-1pt) node[anchor=north] {$\x$};
\foreach \y in {0,1,2,3,4}
    \draw (1pt,\y cm) -- (-1pt,\y cm) node[anchor=east] {$\y$};

\draw[thick,->] (0,0) -- (4.5,0) node[anchor=north west] {x};
\draw[thick,->] (0,0) -- (0,4.5) node[anchor=south east] {y};

\draw[very thick] (0,0)--(3,0)--(3,1)--(2,2)--(0,2)--(0,0);

\fill[purple] (0,0)--(3,0)--(3,1)--(2,2)--(0,2)--(0,0);

\node at (5,2)  {\large  $\cap$};

\end{tikzpicture}
}
\end{lrbox}

\newsavebox{\myboxe}
\begin{lrbox}{\myboxe}{%
\begin{tikzpicture}[scale=0.75,
roundnode/.style={circle, draw=green!30, fill=green!5, very thick, minimum size=7mm, scale=.7},
roundnode2/.style={circle, draw=black!100, fill=green!0, thick, minimum size=7mm, scale=.7},
squarednode/.style={rectangle, draw=red!60, fill=red!5, very thick, minimum size=5mm},
 el/.style = {inner sep=2pt, align=left, scale=.7},
]
\centering

\draw[step=1cm,gray,very thin] (0,0) grid (3.9,3.9);

\foreach \x in {0,1,2,3,4}
   \draw (\x cm,1pt) -- (\x cm,-1pt) node[anchor=north] {$\x$};
\foreach \y in {0,1,2,3,4}
    \draw (1pt,\y cm) -- (-1pt,\y cm) node[anchor=east] {$\y$};

\draw[thick,->] (0,0) -- (4.5,0) node[anchor=north west] {x};
\draw[thick,->] (0,0) -- (0,4.5) node[anchor=south east] {y};

\draw[very thick] (0,0)--(2,0)--(2,2)--(1,3)--(0,3)--(0,0);

\fill[green] (0,0)--(2,0)--(2,2)--(1,3)--(0,3)--(0,0);

\node at (5,2) {\large $=$};
\end{tikzpicture}
}
\end{lrbox}

\newsavebox{\myboxf}
\begin{lrbox}{\myboxf}{%
\begin{tikzpicture}[scale=0.75,
roundnode/.style={circle, draw=green!30, fill=green!5, very thick, minimum size=7mm, scale=.7},
roundnode2/.style={circle, draw=black!100, fill=green!0, thick, minimum size=7mm, scale=.7},
squarednode/.style={rectangle, draw=red!60, fill=red!5, very thick, minimum size=5mm},
 el/.style = {inner sep=2pt, align=left, scale=.7},
]
\centering

\draw[step=1cm,gray,very thin] (0,0) grid (3.9,3.9);

\foreach \x in {0,1,2,3,4}
   \draw (\x cm,1pt) -- (\x cm,-1pt) node[anchor=north] {$\x$};
\foreach \y in {0,1,2,3,4}
    \draw (1pt,\y cm) -- (-1pt,\y cm) node[anchor=east] {$\y$};

\draw[thick,->] (0,0) -- (4.5,0) node[anchor=north west] {x};
\draw[thick,->] (0,0) -- (0,4.5) node[anchor=south east] {y};

\draw[very thick] (0,0)--(2,0)--(2,2)--(0,2)--(0,0);

\fill[teal] (0,0)--(2,0)--(2,2)--(0,2)--(0,0);
\end{tikzpicture}
}
\end{lrbox}

\newcommand{\smallwedge}{\mathrel{\text{\raisebox{0.25ex}{\scalebox{0.8}{$\wedge$}}}}}
\newcommand{\smallvee}{\mathrel{\text{\raisebox{0.25ex}{\scalebox{0.8}{$\vee$}}}}}

Our goal is to define a capacity restriction to replace the straight intersection of cut capacities as used by Krishnan \cite{Kr20??final} which allows us to reduce (or eliminate) the duality gap.
To this end, we first introduce a tighter notion of combining capacities of two cuts, and then generalize to higher number of cuts.

\subsection{Capacity of Two Cuts: Pairwise Capacity} \label{ssec:twocutcap}

We first illustrate the tighter notion of combining capacities of two cuts on a simple example.

\begin{example} \label{duality_gap_example}
Consider the $2$-commodity network introduced in Figure \ref{fig:MultiComNetworkEx}.
Naming the arcs as $a_1=(s,v_2), a_2=(v_2,t)$, and $a_3=(s,t)$, we get two cuts $C_1=\{a_1,a_3\}$ and $C_2=\{a_2,a_3\}$ with capacities shown in Figure \ref{fig:cut_capacities}.

\begin{figure}[ht!]
  \centering
  \includegraphics[scale=1.1]{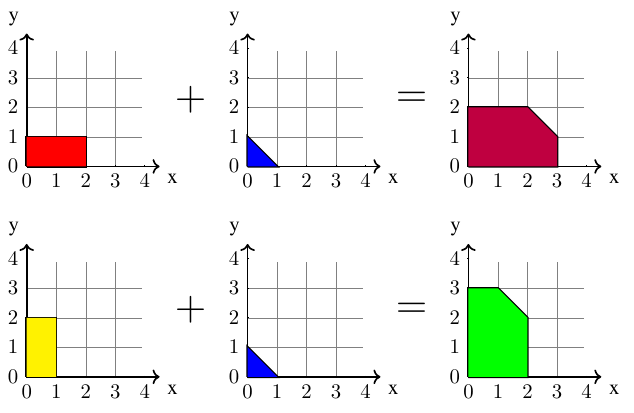}
  \caption{Capacities of the two cuts $C_1, C_2$ in Figure \ref{fig:MultiComNetworkEx}.
  $U(a_1)+U(a_3)=U(C_1)$ (first row) and $U(a_2)+U(a_3)=U(C_2)$ (second row).}
  \label{fig:cut_capacities}
\end{figure}

\begin{definition}[Pairwise Capacity]
  The \emph{pairwise capacity} of two sets of arcs $A=\{a_i\}$ and $B=\{b_j\}$ is defined as $U_P(A,B) = v(A \cap B) + (v(A \setminus B) \cap v(B \setminus A))$.
In words, we take the capacity of the set of arcs in the intersection of the two sets, then add the intersection of the capacity of the arcs unique to each set.
\end{definition}

We refer to pairwise capacity also as \textit{two-cut mutual capacity}, as it is a special case of mutual capacity.
For instance, the pairwise capacity $U_P(C_1,C_2)$ for the cuts in our example requires us to find $C_1 \cap C_2 = \{a_3\}$.
Also note that $C_1$ has the unique arc $a_1$ and $C_2$ has the unique arc $a_2$.
Hence we get $U_P(C_1,C_2) = v(\{a_3\}) + v(a_1) \cap v(a_2)$, as shown in Figure \ref{fig:PairwiseCap}.

\begin{figure} [!ht]
  \centering
  \includegraphics[scale=1.1]{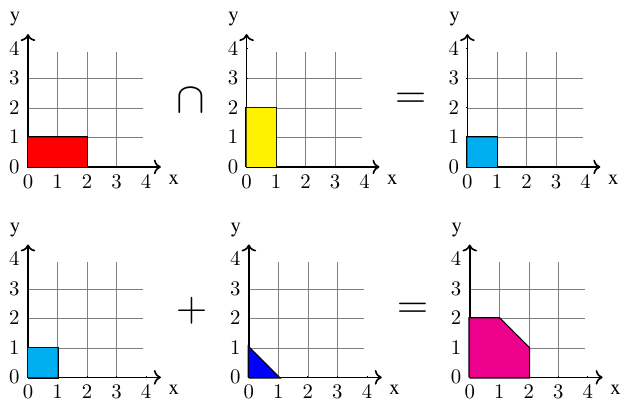}
  \caption{
    Pairwise capacity of two cuts $C_1, C_2$ in Figure \ref{fig:MultiComNetworkEx}.
    First row shows $v(a_1) \cap v(a_2)$.
    Second row shows the pairwise capacity as $(v(a_1) \cap v(a_2)) + v(\{a_3\})$.
    Note that $C_1 \cap C_2 = \{a_3\}$.
  }
  \label{fig:PairwiseCap}
\end{figure}
\end{example}

We note that $A+(B\cap C) \subseteq (A+B)\cap (A+C)$ for the Minkowski sum, but the two sets are not equal in general.
We also note that for a set of arcs $A$ partitioned into sets $B$ and $C$, $v(A)=v(B)+v(C)$.
This is true because any flow $a$ in $v(A)$ can be written as $a=b+c$ where $b \in v(B)$ and $c\in v(C)$, and any flow that can be written as a sum of a flow in $v(B)$ plus one in $v(C)$ lies in $v(A)$.

It follows that $U_P(A,B) \subseteq v(A) \cap v(B)$.
Applied to a family of cuts $\mathcal{C}$, we can then say that $\bigcap \limits_{c_i, c_j \in \mathcal{C}} U_P(c_i,c_j) \subseteq \bigcap \limits_{c_l \in \mathcal{C}} C_l$.
And since the total capacity is given by $ \bigcap \limits_{c_l \in \mathcal{C}} C_l$, we can say that $\bigcap \limits_{c_i, c_j \in \mathcal{C}} U_{P}(c_i,c_j)$ is in general a tighter constraint than that given by Ghrist and Krishnan \cite{GhKr2013}.

We demonstrate that total capacity may include some flow values that lack a feasible realization in Figures \ref{Total_Cap_Gap} and \ref{MCMFExTotalCapsGap}.
For the network in Figure \ref{fig:MultiComNetworkEx}, the total capacity of the two cuts $C_1, C_2$ is shown in Figure \ref{Total_Cap_Gap}.
At the same time, the corresponding pairwise capacity shown in Figure \ref{fig:PairwiseCap} is strictly smaller than the total capacity.

We observe that flow $(2,2)$ is contained in the total capacity.
To achieve a flow of $(2,2)$ through the network, cut $C_1$ requires a flow of $(2,1)$ through $a_1$ and $(0,1)$ through $a_3$.
At the same time, cut $C_2$ requires a flow of $(1,2)$ through $a_2$ and $(1,0)$ through $a_3$.
Since $a_3$ cannot take a flow of $(0,1)$ and $(1,0)$ at the same time, the network cannot transport $(2,2)$.

\begin{figure}[ht!]
  \centering
  \includegraphics[scale=1.1]{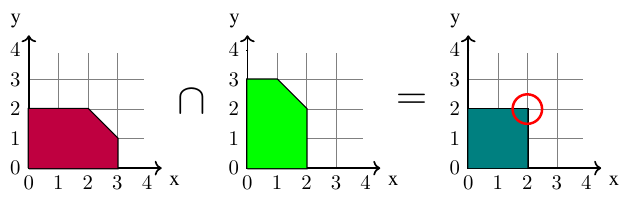}
  \caption{
    Capacities  $U(C_1)$ (left), $U(C_2)$ (center), and total capacity (right).
    Figure \ref{fig:cut_capacities} shows the computation of $U(C_1)$ and $U(C_2)$.
    Note the vector $(2,2)$ lies in total capacity.
    }
    \label{Total_Cap_Gap}
\end{figure}

\begin{figure}[ht!]
  \centering
  \includegraphics[scale=1.25]{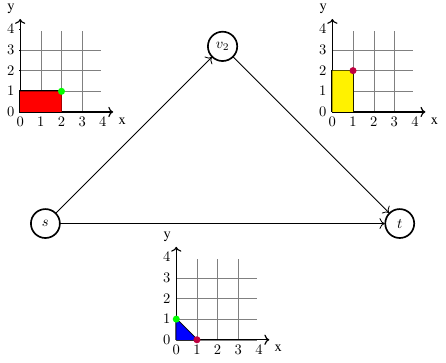}
  \caption{Demonstration of gap in total capacity.
  Arc flows needed to realize the flow $(2,2)$, which is contained in the total capacity as shown in Figure \ref{Total_Cap_Gap}, are shown as green dots for cut $C_1$ and pink dots for cut $C_2$.
  But both flows (green and pink dots) cannot be \emph{simultaneously} realized in $a_3$.
  }
  \label{MCMFExTotalCapsGap}
\end{figure}

The example demonstrates that pairwise capacity could avoid gaps allowed by total capacity.
To prove that pairwise capacity still bounds multicommodity flows, we must show that any valid flow must lie within the given capacity bounds.
The following lemma presents this result.

\begin{lemma} \label{lem:inclusion}
  Given a pair of cuts $C_i, C_j$, any valid flow must fall within
  \[ U_P(C_i,C_j) = v(C_i \cap C_j) + (v(C_i \setminus C_j) \cap v(C_j \setminus C_i)),\]
  i.e., the sum of the values of all arcs common to the cuts in the collection plus the intersection of cut values not included in the intersection.
\end{lemma}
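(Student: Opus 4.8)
The plan is to fix an arbitrary feasible flow $\phi$ and show that its flow value $f$ lies in $U_P(C_i,C_j)$. The structural fact I would exploit is that the oriented arc set of each cut splits into the arcs shared by both cuts and the arcs unique to one cut, and that $\phi$ assigns a single, orientation-consistent value to every arc. The membership statements about Minkowski sums will then follow immediately from the capacity constraints $\phi(a)\in C_a$, while the matching of the shared contribution across the two cuts is where the work lies.

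First I would invoke Lemma \ref{flow_equality} to record that the net value of $\phi$ across $C_i$ and across $C_j$ both equal the common flow value $f$. Writing the oriented arc sets as $E(C_i)$ and $E(C_j)$, I decompose each cut's value as a sum over the shared arcs $E(C_i)\cap E(C_j)$ plus a sum over the unique arcs $E(C_i)\setminus E(C_j)$ (respectively $E(C_j)\setminus E(C_i)$). Setting $g=\sum_{a\in E(C_i)\cap E(C_j)}\phi(a)$, the capacity constraints together with the definition of the Minkowski sum give $g\in v(C_i\cap C_j)$, since $g$ is realized as an elementwise vector sum taking one term from each shared arc capacity. By the same reasoning, $h_i=\sum_{a\in E(C_i)\setminus E(C_j)}\phi(a)\in v(C_i\setminus C_j)$ and $h_j=\sum_{a\in E(C_j)\setminus E(C_i)}\phi(a)\in v(C_j\setminus C_i)$.

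The two cut-value decompositions then read $f=g+h_i$ and $f=g+h_j$. Because the shared term $g$ is literally the same sum of the same arc values in both equations, I subtract to conclude $h_i=f-g=h_j$, so the single vector $f-g$ lies in both $v(C_i\setminus C_j)$ and $v(C_j\setminus C_i)$, i.e. $f-g\in v(C_i\setminus C_j)\cap v(C_j\setminus C_i)$. Adding back $g\in v(C_i\cap C_j)$ yields $f=g+(f-g)\in v(C_i\cap C_j)+\bigl(v(C_i\setminus C_j)\cap v(C_j\setminus C_i)\bigr)=U_P(C_i,C_j)$, as required.

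The step I expect to require the most care is the bookkeeping of orientations in forming $g$. An arc may appear as a forward arc in one cut and a backward arc in the other; as an \emph{oriented} arc it then belongs to neither shared set but to both unique sets, with opposite signs. I would therefore define the shared set strictly as oriented arcs common to both cuts, so that $g$ is genuinely the identical subsum in both cut-value expressions and the contributions of oppositely oriented arcs are absorbed correctly into $h_i$ and $h_j$. Checking that this convention preserves $f=g+h_i=g+h_j$ — which is exactly where the single-valued nature of $\phi$ and Lemma \ref{flow_equality} do the real work — is the crux; once it is in place, the Minkowski-sum membership claims are routine.
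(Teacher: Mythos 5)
Your proposal is correct and follows essentially the same route as the paper's proof: decompose the flow value over each cut into the contribution of the shared arcs plus the contribution of the arcs unique to that cut, then use equality of the two cut values to conclude the unique contributions coincide and hence lie in the intersection $v(C_i\setminus C_j)\cap v(C_j\setminus C_i)$. Your extra attention to orientation bookkeeping (an arc that is forward in one cut and backward in the other belongs to the unique sets, not the shared set) is a point the paper's proof passes over silently, but it refines rather than changes the argument.
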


\begin{proof}
  For any pair of cuts, flow can be partitioned into a portion through the common arcs and flow through the remaining arcs.
  Since the flow can assign only one value to each of the common arcs, we must be able to write the value of flow as $a+b$ and $a+c$ (in standard vector addition) where $a$ is in $v (\text{the common arcs})$, $b$ is in $v(\text{the arcs unique to } C_i)$, and $c$ is in $v(\text{the arcs unique to } C_j)$.
  Since the flow value across each cut must be equal, $b=c$.
  Hence $a\in v(C_i\cap C_j)$ and $b=c\in (v(C_i\setminus C_j)\cap v(C_j \setminus C_i))$.
  Thus the feasible flow values through $C_i$ and $C_j$ must fall within $U_{P} (C_i,C_j)$.
\end{proof}

We further note that since the feasible flow values through any pair of cuts $(C_i,C_j)$ is bounded by $U_{P} (C_i,C_j)$, the feasible flow values through the network must be bounded by $\bigcap \limits_{i,j} U_{P} (C_i,C_j)$ taken over all pairs of cuts.

\subsubsection{Duality Gap of Pairwise Capacity} \label{sssec:Gap_example}

While pairwise capacity gives us a tighter approximation than total capacity, it may still not capture the exact feasible region.
We present an example that illustrates this gap.

\begin{example} \label{ex:Gap_example}
  
We present in Figure \ref{ExNet} a network for which taking the intersection over all $s$--$t$ cuts gives a duality gap with the true maximum flow.

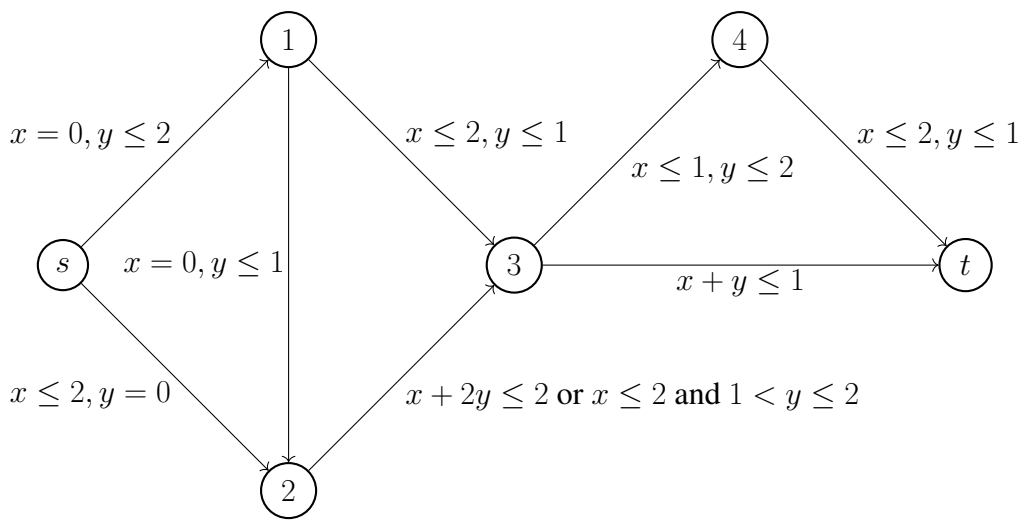
\begin{figure}[ht!]
    \Large
    \centering
    \begin{tikzpicture}[
roundnode/.style={circle, draw=green!30, fill=green!5, very thick, minimum size=7mm, scale=.7},
roundnode2/.style={circle, draw=black!100, fill=green!0, thick, minimum size=7mm, scale=.7},
squarednode/.style={rectangle, draw=red!60, fill=red!5, very thick, minimum size=5mm},
 el/.style = {inner sep=2pt, align=left, scale=.7},
]
\foreach \i in {3}{

\node[roundnode2] at (0,0) (s) {$s$};
\node[roundnode2] at (1*\i,1*\i) (v1) {$1$};
\node[roundnode2] at (1*\i,-1*\i) (v2) {$2$};
\node[roundnode2] at (2*\i,0) (v3)  {$3$};
\node[roundnode2] at (3*\i,1*\i) (v4)  {$4$};
\node[roundnode2] at (4*\i,0) (t)  {$t$};
}

\path
 (s) edge [->] node[el,above left] {$x=0, y\leq 2$} (v1)
 (s) edge [->] node[el,below left] {$x\leq 2, y=0$} (v2)
 (v1) edge [->] node[el,left] {$x=0, y\leq 1$} (v2)
 (v1) edge [->] node[el,above right] {$x\leq 2, y\leq 1$} (v3)
 (v2) edge [->] node[el, below right] {$x+2y\leq 2 \text{ or } x\leq 2 \text{ and } 1 < y \leq 2$} (v3)
 (v3) edge [->] node[el, below right] {$x\leq 1, y\leq 2$} (v4)
 (v3) edge [->] node[el,below] {$x+y\leq 1$} (t)
 (v4) edge [->] node[el,above right] {$x\leq 2, y\leq 1$} (t);
    
    \end{tikzpicture}
    \caption{A network for which pairwise capacity reduces but does not eliminate the duality gap.}
    \label{ExNet}
\end{figure}

The intersection over all cuts gives the capacity $(x \leq 2, y \leq 2)$ for this network.
However, we make the following observations.
\begin{itemize}
    \item We cannot get any $x$ to node $1$, so we can rewrite the capacity of $(1,3)$ as $(x=0, y \leq 1)$.
    \item We cannot get $(x=2, y=2)$ to node $2$, but we can get up to two units of $x$ and one unit of $y$, so the capacity can be rewritten as $(x+2y \leq 2)$.
\end{itemize}
We can determine the true feasible region using these observations as shown in Figure \ref{fullcap}. For instance, it is impossible to find a flow with flow value $(1,2)$.
Since $(s,1)$ can transport only commodity $y$ and $(s,2)$ can transport only commodity $x$, we must route two units of $y$ through $(s,1)$.
Likewise, we must route one unit of $x$ through $(s,2)$.
Then since inflow equals outflow, we must send one unit of $y$ through each of $(1,2)$ and $(1,3)$.
But this means that we must send one unit each of $x$ and $y$ along $(2,3)$.
However, this is outside the capacity of $(2,3)$.
Hence $(1,2)$ is outside the feasible region of flow values even though it lies in all pairwise capacities in this network.

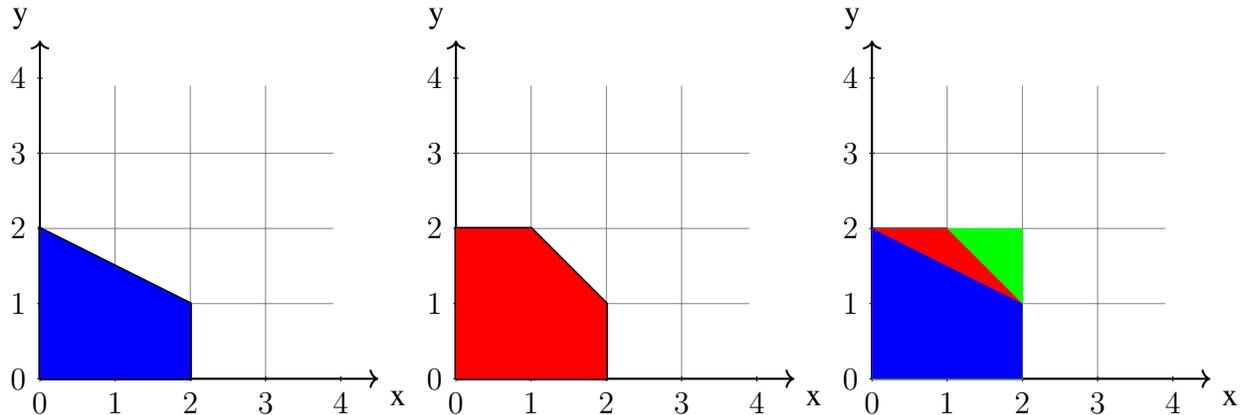
\begin{figure}[ht!]
\begin{minipage}{.33\textwidth}

\centering
\begin{tikzpicture}[scale=1,
roundnode/.style={circle, draw=green!30, fill=green!5, very thick, minimum size=7mm, scale=.7},
roundnode2/.style={circle, draw=black!100, fill=green!0, thick, minimum size=7mm, scale=.7},
squarednode/.style={rectangle, draw=red!60, fill=red!5, very thick, minimum size=5mm},
 el/.style = {inner sep=2pt, align=left, scale=.7},
]
\centering

\draw[step=1cm,gray,very thin] (0,0) grid (3.9,3.9);

\foreach \x in {0,1,2,3,4}
   \draw (\x cm,1pt) -- (\x cm,-1pt) node[anchor=north] {$\x$};
\foreach \y in {0,1,2,3,4}
    \draw (1pt,\y cm) -- (-1pt,\y cm) node[anchor=east] {$\y$};

\draw[thick,->] (0,0) -- (4.5,0) node[anchor=north west] {x};
\draw[thick,->] (0,0) -- (0,4.5) node[anchor=south east] {y};

\draw [very thick] (0,0)--(0,2)--(2,1)--(2,0)--(0,0);
\fill[blue] (0,0)--(0,2)--(2,1)--(2,0)--(0,0);

\end{tikzpicture}
\end{minipage}
\begin{minipage}{.33\textwidth}

\centering
\begin{tikzpicture}[scale=1,
roundnode/.style={circle, draw=green!30, fill=green!5, very thick, minimum size=7mm, scale=.7},
roundnode2/.style={circle, draw=black!100, fill=green!0, thick, minimum size=7mm, scale=.7},
squarednode/.style={rectangle, draw=red!60, fill=red!5, very thick, minimum size=5mm},
 el/.style = {inner sep=2pt, align=left, scale=.7},
]
\centering

\draw[step=1cm,gray,very thin] (0,0) grid (3.9,3.9);

\foreach \x in {0,1,2,3,4}
   \draw (\x cm,1pt) -- (\x cm,-1pt) node[anchor=north] {$\x$};
\foreach \y in {0,1,2,3,4}
    \draw (1pt,\y cm) -- (-1pt,\y cm) node[anchor=east] {$\y$};

\draw[thick,->] (0,0) -- (4.5,0) node[anchor=north west] {x};
\draw[thick,->] (0,0) -- (0,4.5) node[anchor=south east] {y};

\draw [very thick] (0,0)--(0,2)--(1,2)--(2,1)--(2,0)--(0,0);
\fill[red] (0,0)--(0,2)--(1,2)--(2,1)--(2,0)--(0,0);

\end{tikzpicture}
\end{minipage}
\begin{minipage}{.33\textwidth}
\begin{tikzpicture}[scale=1,
roundnode/.style={circle, draw=green!30, fill=green!5, very thick, minimum size=7mm, scale=.7},
roundnode2/.style={circle, draw=black!100, fill=green!0, thick, minimum size=7mm, scale=.7},
squarednode/.style={rectangle, draw=red!60, fill=red!5, very thick, minimum size=5mm},
 el/.style = {inner sep=2pt, align=left, scale=.7},
]
\centering

\draw[step=1cm,gray,very thin] (0,0) grid (3.9,3.9);

\foreach \x in {0,1,2,3,4}
   \draw (\x cm,1pt) -- (\x cm,-1pt) node[anchor=north] {$\x$};
\foreach \y in {0,1,2,3,4}
    \draw (1pt,\y cm) -- (-1pt,\y cm) node[anchor=east] {$\y$};

\draw[thick,->] (0,0) -- (4.5,0) node[anchor=north west] {x};
\draw[thick,->] (0,0) -- (0,4.5) node[anchor=south east] {y};

\draw (0,0)--(0,2)--(1,2)--(2,1)--(2,0)--(0,0);
\draw (0,2)--(1,2);

\fill[green] (0,0)--(0,2)--(2,2)--(2,0)--(0,0);
\fill[red] (0,0)--(0,2)--(1,2)--(2,1)--(2,0)--(0,0);
\fill[blue] (0,0)--(0,2)--(2,1)--(2,0)--(0,0);

\end{tikzpicture}
\end{minipage}
\caption{
  Feasible regions: true feasible region in blue (left), the region from pairwise capacities in red (middle), and the total capacity in green shown along with the other two regions for easy comparison (right).
}
\label{fullcap}
\end{figure}

To find the mutual capacity of the cuts $\{(s,2),(1,2),(1,3)\}$ and $\{(s,1),(s,2)\}$, we take the capacity of the common arc $(s,2)$, i.e., ($x \leq2, y=0$), and add the intersection of the sum of capacities of arcs unique to each cut, i.e.,
\[ (x=0, y \leq 2) \cap ( (x=0, y \leq 1) + (x \leq 2, y \leq 1) ) = (x=0, y \leq 2), \]
for a total bound of $(x \leq 2, y \leq 2)$.
Hence for the portion of the network with nodes $\{s,1,2,3\}$, we get an approximation of the actual capacity given by $(x \leq 2, y \leq 2)$.

For the portion of the network with nodes $\{3,4,t\}$, we get the region in the center in Figure \ref{fullcap} (this calculation is identical to the one shown in Figure \ref{fig:PairwiseCap}).
As this region is a subset of $(x \leq 2, y \leq 2)$, the intersection over all two-way mutual capacities gives us the same region shown in the center of Figure \ref{fullcap}.
Note how the true capacity of the network is contained in the approximation from pairwise capacities, which in turn is contained in the total capacity.
\end{example}

\subsection{Mutual Capacity of Fully Disjoint Networks} \label{ssec:DisjointMutualCap}

Before we consider the generalization of pairwise capacity to the case of combining capacities of multiple cuts in the most general setting, we present the result for a class of networks with specific structure.
The reader will find our arguments for this special case easier to follow before transitioning to the general case.
We say a network $N$ is \emph{fully disjoint} if all of its $s$--$t$ paths are node disjoint except at $s$ and $t$.
We illustrate some new definitions and results on the network in Figure \ref{fig:restriction_network}, where $a_{i,j}$ has capacity $c_{i,j}$.

\begin{figure}[htp!]
\centering
\begin{tikzpicture}[
roundnode/.style={circle, draw=green!30, fill=green!5, very thick, minimum size=7mm, scale=.7},
roundnode2/.style={circle, draw=black!100, fill=green!0, thick, minimum size=7mm, scale=.7},
squarednode/.style={rectangle, draw=red!60, fill=red!5, very thick, minimum size=5mm},
 el/.style = {inner sep=2pt, align=left, scale=.7},
]
\node[roundnode2] (v1) {$s$};
\node[roundnode2] (v3) [right=of v1] {$v_3$};
\node[roundnode2] (v2) [above=of v3] {$v_2$};
\node[roundnode2] (v4) [below=of v3] {$v_4$};
\node[roundnode2] (v5) [right=of v2] {$v_5$};
\node[roundnode2] (v6) [right=of v3] {$v_6$};
\node[roundnode2] (v7) [right=of v4] {$v_7$};
\node[roundnode2] (v8) [right=of v6] {$t$};

\path
 (v1) edge [->] node[el,above left] {$a_{1,1}$} (v2)
 (v1) edge [->] node[el,below] {$a_{2,1}$} (v3)
 (v1) edge [->] node[el, below left] {$a_{3,1}$} (v4)
 (v2) edge [->] node[el,above] {$a_{1,2}$} (v5)
 (v3) edge [->] node[el,below] {$a_{2,2}$} (v6)
 (v4) edge [->] node[el, above] {$a_{3,2}$} (v7)
 (v5) edge [->] node[el,above right] {$a_{1,3}$} (v8)
 (v6) edge [->] node[el,below] {$a_{2,3}$} (v8)
 (v7) edge [->] node[el, below right] {$a_{3,3}$} (v8);

\end{tikzpicture}
\caption{Example network for restrictions.}
\label{fig:restriction_network}
\end{figure}
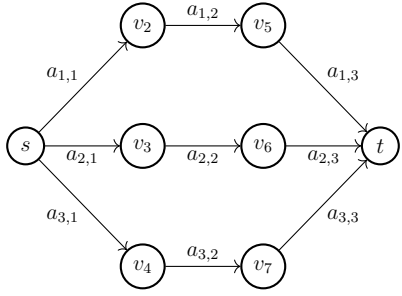

To use mutual capacity to find the true capacity of a fully disjoint network, we develop a few more nuanced results.
First, we expand our idea of mutual capacity.

\begin{definition}[Restriction]
  A \textit{restriction} on a network is a constraint on the flow capacities that can be written as a sum over all possible intersections of arc capacities.
\end{definition}

All cuts are restrictions, as are the mutual capacities of pairs of cuts.

\begin{definition}[Compatible]
  Two restrictions are said to be \emph{compatible} if their capacities differ in exactly one summand.
\end{definition}

In the network in Figure \ref{fig:restriction_network} the cuts $C_{1,1}=\{a_{1,1},a_{2,1},a_{3,1}\}$ and $C_{1,2}=\{a_{1,2},a_{2,1},a_{3,1}\}$ are compatible as their capacities differ only along the first path.
Likewise, $U_P(C_{1,1},C_{1,2}) = c_{2,1} + c_{3,1} + (c_{1,1} \cap c_{1,2})$ is compatible with $C_{1,3} = \{a_{1,3},a_{2,1},a_{3,1}\}$ since $U(C_{1,3}) = c_{2,1} + c_{3,1} + c_{1,3}$, which differs from $U_P(C_{1,1},C_{1,2})$ in exactly one term ($c_{1,3}$ rather than $(c_{1,1}\cap c_{1,2})$).
However,  $C_{1,2}$ and $C_{2,2} =\{a_{1,1},a_{2,2},a_{3,1}\}$ are not compatible since they each have two unique arcs and hence two unique elements in their capacities.

The pairwise capacity of a pair of compatible restrictions is a generalization of the pairwise capacity calculation for a pair of cuts.
Let $R_1=\{a_1,\dots,a_n\}$ and $R_2=\{a_1,\dots,a_{n-1},b_n\}$. Then $U_P(R_1,R_2)=U(\{a_1,\dots,a_{n-1}\})+(U(a_n) \cap U(b_n))$.
For instance, we noted above that $U_P(C_{1,1},C_{1,2})$ is compatible with $C_{1,3}$.
Then $U_P(U_P(C_{1,1},C_{1,2}),C_{1,3})$ is given by $c_{2,1}+c_{3,1}$, the common capacities, plus $c_{1,3} \cap (c_{1,1} \cap c_{1,2})$.
In fact, we show that all valid flows must satisfy such mutual capacities.

\begin{lemma} \label{lem:validflowcomprest}
  Any valid flow in a fully disjoint network satisfies mutual capacities of all compatible restrictions.
\end{lemma}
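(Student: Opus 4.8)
The plan is to exploit the rigid structure of flows in a fully disjoint network in order to reduce the claim to the same Minkowski-sum inclusion that drove Lemma \ref{lem:inclusion}. First I would establish a structural fact about flows: in a fully disjoint network the $s$--$t$ paths are node disjoint except at their endpoints, so every interior node lies on a unique path and has in-degree and out-degree one along that path. Flow conservation then equates the values on consecutive arcs, forcing a valid flow $\phi$ to be constant along each path. Writing $\phi_i \in \R^k$ for the common value assigned to all arcs of path $i$, the capacity constraints give $\phi_i \in c_{i,j}$ for every arc $a_{i,j}$ on path $i$, hence $\phi_i \in \bigcap_j c_{i,j}$. The flow value is then the net outflow of $s$, namely $f = \sum_i \phi_i$, a Minkowski-sum decomposition indexed by the paths.

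Next I would record the form of a restriction in this setting. Because cuts select exactly one arc from each path, and the pairwise-capacity operation only ever intersects the summands in which two compatible restrictions differ, every restriction decomposes as $R = \sum_i I_i$, where each summand $I_i = \bigcap_{j \in S_i} c_{i,j}$ is an intersection of capacities of arcs lying on a single path $i$. Compatibility of $R_1 = \sum_i I_i$ and $R_2 = \sum_i I'_i$ then means $I_i = I'_i$ for all but one index $i_0$, and the pairwise capacity is $U_P(R_1,R_2) = \sum_{i \neq i_0} I_i + (I_{i_0} \cap I'_{i_0})$, which is again a restriction of the same form, its last summand being the intersection over the union $S_{i_0} \cup S'_{i_0}$ of arcs of path $i_0$.

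With these two facts in hand the inclusion is immediate: for each $i$ the path value $\phi_i$ lies in $\bigcap_j c_{i,j}$, which is contained in $I_i$ for $i \neq i_0$ and in $I_{i_0} \cap I'_{i_0}$ for $i = i_0$, since all of these are intersections of subsets of the arc capacities on path $i$. Summing the memberships, the flow value $f = \sum_i \phi_i$ lands in $\sum_{i \neq i_0} I_i + (I_{i_0} \cap I'_{i_0}) = U_P(R_1,R_2)$, which is exactly the assertion that $\phi$ satisfies the mutual capacity of the compatible pair.

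I expect the main obstacle to be the second step rather than the arithmetic of the third: one must argue carefully that, in a fully disjoint network, the summands of every restriction obtained from cuts by iterated pairwise capacities really are indexed by paths and involve only arcs of a single path. This is what licenses the clean decomposition $R = \sum_i I_i$ and the reading of \emph{differing in exactly one summand} as \emph{differing on one path}. Making this precise likely needs an induction on the number of pairwise-capacity operations used to build $R_1$ and $R_2$, with the base case supplied by cuts together with Lemma \ref{lem:inclusion}, and with care that the node-disjointness hypothesis is exactly what prevents an intersection from ever mixing arcs of two distinct paths.
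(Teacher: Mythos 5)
Your proof is correct, but it is organized differently from the paper's. The paper argues by induction on the pairwise-capacity operation: taking Lemma \ref{lem:inclusion} as the base case, it assumes $f$ already abides by two compatible restrictions $R_1$ and $R_2$, notes that $f$ takes one well-defined value on the region represented by the common summands, deduces that the residual values on the two unique summands coincide and hence lie in their intersection, and concludes $f\in U_P(R_1,R_2)$. You instead characterize both sides globally: a flow is constant along each path, so $f=\sum_i\phi_i$ with $\phi_i\in\bigcap_j c_{i,j}$, and every restriction reachable from cuts is a path-indexed sum $\sum_i I_i$ of intersections of capacities of arcs lying on single paths, after which the inclusion is termwise. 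The two arguments rest on the same structural facts---your path-constancy of flows is exactly what makes the paper's phrase about $f$ ``taking the same value'' on a common summand meaningful once a summand is an intersection of several arc capacities---but you make that dependence explicit, and your version yields the stronger conclusion that $f$ lies in the tightest restriction $\sum_i\bigcap_j c_{i,j}$, i.e., essentially the forward inclusion of Theorem \ref{thm:disjointMFMC}. The price is the restriction-structure lemma you flag at the end; it does need proving, but it is routine: every restriction in the chain keeps exactly one summand per path, so two such restrictions sharing all but one summand must differ on a single path, and the pairwise-capacity operation preserves the form. What your route buys is that the flow-side verification collapses to a one-line computation; what it costs is front-loading the combinatorics of restrictions that the paper's induction handles implicitly.
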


\begin{proof}
  Suppose $f$ is a valid flow through a fully disjoint network.
  Then by Lemma \ref{lem:inclusion}, valid flows must abide by restrictions from pairs of compatible cuts.
  Now suppose that $f$ must abide by two compatible restrictions $R_1$ and $R_2$.
  Since they differ in exactly one summand, $f$ must take the same value on the region of the network represented by common summands and thus the same value on the unique summand for each restriction.
  Thus the value of $f$ on the unique summand in each restriction must lie in the intersection of the capacities of those summands.
  Hence the value of $f$ is in the restriction formed by taking the sum of common summands in $R_1$ and $R_2$ plus the intersection of the value of their unique summands.
\end{proof}

We now prove that the intersection over the values of all mutual capacities of cuts and all compatible restrictions gives the set of achievable flow values in the case of fully disjoint networks.

\begin{theorem} \label{thm:disjointMFMC}
  Let $N$ be a fully disjoint network with $p$ $s$--$t$ paths.
  We assume that extra edges are introduced into each path except the longest one with capacity equal to that of the final arc in that path so that all paths have the same length $\ell$.
  Let $a_{i,j}$ denote the $j$-th arc along the $i$-th path.
  Then the intersection over the values of all pairwise capacities of cuts and all compatible restrictions gives exactly the set of all achievable flow values: $\sum \limits_{i=1}^{p} \bigcap \limits_{j=1}^{\ell} U(a_{i,j})$.
\end{theorem}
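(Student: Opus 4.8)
The plan is to prove the stated equality through a chain of three inclusions among three sets: the set $A$ of achievable flow values, the set $I$ defined as the intersection over the values of all pairwise capacities of cuts and all compatible restrictions, and the target expression $R = \sum_{i=1}^{p}\bigcap_{j=1}^{\ell} U(a_{i,j})$. I would establish $R \subseteq A$, $A \subseteq I$, and $I \subseteq R$, which together force $A = I = R$. The inclusion $A \subseteq I$ is immediate from the earlier results: by Lemma \ref{lem:inclusion} any valid flow lies in the pairwise capacity of every pair of cuts, and by Lemma \ref{lem:validflowcomprest} it lies in every compatible restriction, so it belongs to each term defining $I$ and hence to $I$.

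For $R \subseteq A$ I would argue constructively. Any $f \in R$ decomposes, by the definition of the Minkowski sum, as $f = \sum_{i=1}^{p} g_i$ with $g_i \in \bigcap_{j=1}^{\ell} U(a_{i,j})$. Because the $p$ paths are node-disjoint except at $s$ and $t$, I can route the vector $g_i$ along every arc of the $i$-th path. Flow conservation holds at each internal node, since its unique inflow and its unique outflow both equal $g_i$; the capacity constraints hold because $g_i \in U(a_{i,j})$ for every $j$; and the net outflow at $s$, equivalently the inflow at $t$, is $\sum_i g_i = f$. Hence $f$ is an achievable flow value and $R \subseteq A$.

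The substantive inclusion is $I \subseteq R$, for which I would show that $R$ is itself the value of one of the restrictions appearing in the intersection, so that $I$, being contained in that term, is contained in $R$. First note that in a fully disjoint network every tuple $\mathbf{j} = (j_1,\dots,j_p)$ determines a genuine $s$--$t$ cut $C_{\mathbf{j}} = \{a_{1,j_1},\dots,a_{p,j_p}\}$ with value $\sum_i U(a_{i,j_i})$, obtained by splitting each path at its chosen arc. I would then prove, by induction on $\sum_i(|S_i|-1)$, that for every choice of nonempty index sets $S_1,\dots,S_p \subseteq \{1,\dots,\ell\}$ the expression $R(S_1,\dots,S_p) = \sum_{i=1}^{p}\bigcap_{j\in S_i} U(a_{i,j})$ arises as an iterated pairwise capacity of compatible restrictions. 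The base case, all $S_i$ singletons, is a cut. For the inductive step I pick a path $i$ with $|S_i|\ge 2$ and an index $j^{\ast}\in S_i$; the restrictions $R(\dots,S_i\setminus\{j^{\ast}\},\dots)$ and $R(\dots,\{j^{\ast}\},\dots)$, with the remaining paths refined identically, are each obtainable by the inductive hypothesis, differ in exactly the path-$i$ summand, and are therefore compatible, and their pairwise capacity equals $R(\dots,S_i,\dots)$. Taking all $S_i = \{1,\dots,\ell\}$ realizes $R$ as a restriction and yields $I \subseteq R$.

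I expect the delicate point to be precisely this inductive construction. A naive left-to-right refinement, in which one fully intersects the arcs of the first path and only then moves to the next, breaks down: the partially refined restriction and the next cut would differ in two summands and thus fail the definition of compatible restrictions. The remedy is the recursive realization above, in which each intermediate restriction that is already refined on several paths is produced from strictly smaller restrictions, so that every combining step alters exactly one summand and the pairwise-capacity formula for compatible restrictions applies cleanly.
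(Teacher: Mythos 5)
Your proposal is correct and follows the same skeleton as the paper's proof: achievable flows lie in the intersection by Lemmas \ref{lem:inclusion} and \ref{lem:validflowcomprest}, and the target region $R=\sum_{i}\bigcap_{j}U(a_{i,j})$ is exhibited as an iterated pairwise capacity of compatible restrictions, so the intersection $I$ is contained in $R$. Two points of comparison are worth recording. First, you explicitly prove $R\subseteq A$ by decomposing $f=\sum_i g_i$ with $g_i\in\bigcap_j U(a_{i,j})$ and routing $g_i$ uniformly along the $i$-th path; the paper's proof never carries out this step---it reduces the claim to showing that some iterated restriction ``equals the set of all achievable flow values'' and then only verifies that the iteration produces the formula $\sum_i\bigcap_j U(a_{i,j})$, with the achievability of every point of that region asserted rather than demonstrated. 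Your routing argument closes that gap and is the genuinely additional content of your write-up. Second, your concern that a left-to-right refinement ``breaks down'' slightly mischaracterizes the paper's iteration: at each stage the paper combines two restrictions that already share the collapsed summand $\bigcap_j c_{i,j}$ on every previously processed path and differ only in the arc chosen on the current path, so every combining step alters exactly one summand and the compatibility definition applies as stated. Your recursive indexing by sets $S_1,\dots,S_p$ enforces the same invariant in an equivalent, arguably more transparent way, but it is a reorganization rather than a correction.
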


\begin{proof} 
  We first note that by Lemma \ref{lem:validflowcomprest}, the set of achievable flows must be a subset of such an intersection.
  Hence it remains to show that such an intersection is a subset of the set of achievable flows.
  It will suffice to show that an intersection over some subset of the pairwise capacities equals the set of all achievable flow values.

  We number the $s$--$t$ paths from $1$ to $p$.
  Since all paths are fully disjoint, each collection of arcs containing exactly one arc from each path forms a cut.
  All cuts of our interest have this form.
  
  For any cut $C_1$ using the first arc in the first path, there is a set of compatible cuts $C_j$ each identical to $C_1$ except that $C_j$ uses arc $j$ from path 1 rather than arc 1.
  This is a set of compatible cuts with capacity $\texttt{Cap}+\cap_{j=1}^{\ell} c_j$, where $\texttt{Cap}$ is the sum of capacities from all arcs except the arc from path 1, and $c_j$ is the capacity of arc $j$ from path 1.
  
  Then each cut has an associated, tighter restriction formed by replacing the capacity of an arc from path 1 with the intersection of capacities over that path.
  One can think of this step as replacing each cut with this tighter restriction, essentially reducing path 1 to a single arc with capacity equal to the intersection over capacities in the original path 1.
  
  Since all cuts now take the same value on path 1, we can repeat this process over the remaining paths.
  At each step, we generate restrictions that replace the capacities from the current path with the intersection over capacities from the same path.
  Ultimately, this gives us our tightest restriction with capacity $\sum \limits_{i=1}^{p} \bigcap \limits_{j=1}^{\ell} U(a_{i,j})$, i.e., for each path take the intersection of capacities of arcs over that path, then sum the values of those intersections.
\end{proof}

In our example above, we first partition cuts by equivalence relation into sets so that cuts in the one set pass through the same arcs in paths 2 and 3.
For instance, one set is $\{\{a_{1,1},a_{2,1},a_{3,1}\}$, $\{a_{1,2},a_{2,1},a_{3,1}\}$, $\{a_{1,3},a_{2,1},a_{3,1}\}\}$, and another can be $\{\{a_{1,1},a_{2,3},a_{3,1}\}$, $\{a_{1,2},a_{2,3},a_{3,1}\}$, $\{a_{1,3},$ $a_{2,3},a_{3,1}\}\}$.
For each set, we generate a restriction with capacity $\texttt{Cap}=$ sum of capacities of arcs from path $2$  and  $3$ $+ \bigcap c_{1,j}$,  where $c_{1,j}$ is the capacity of arc $a_{1,j}$.

Using these restrictions, we repeat the process along path 2.
So now, with $c_1= \bigcap_j U(a_{1,j})$ our partitioned sets include $\{\{a_{2,1}, a_{3,1}\}$, $\{a_{2,2},a_{3,1}\}$, $\{a_{2,3},a_{3,1}\}\}$, for instance.
Note that the restrictions from our first iteration are uniquely identified by two arcs, since the path 1 component in all restrictions is the same.

Applying the same restriction-generating process to each set, we get a restriction with capacity $\texttt{Cap}=\text{ capacity of arcs from path 3}+\bigcap \limits_{j}{c_{1,j}}+ \bigcap \limits_{j}{c_{2,j}}$.

This set of restrictions is already compatible, as they differ only in the arc each of them include from path 3.
Applying  our restriction-generating process again, we now get a single restriction with capacity $\texttt{Cap}=\bigcap \limits_{j}{c_{1,j}}+ \bigcap \limits_{j}{c_{2,j}} +\bigcap \limits_{j}{c_{3,j}}$.
This is the true capacity, as we can verify independently.

\section{General Mutual Capacity and Duality Result} \label{sec:GenMutualCap}

Our approach to prove Theorem \ref{thm:disjointMFMC} does not generalize to all networks.
For instance, Example \ref{ex:Gap_example} presents a network for which two-cut mutual capacities reduce but do not eliminate the gap.
We need a further generalization to calculate the true feasible region in all cases.

To this end, we use the enriched Minkowski sum and enriched intersection (Definitions \ref{def:EnrichMsum} and \ref{def:EnrichInt}) instead of using Minkowski sums and intersections. 
Then for a single cut, the capacity is the enriched Minkowski sum.
For a set of cuts $\mathcal{C}$, the mutual capacity $U_M(\mathcal{C})$ should capture the collection of flows which can fit through all cuts.
The general duality result should then give that the mutual capacity of a set of cuts is the set of all local flows over the collection of cuts.
We present Lemmas \ref{flow_equality}, \ref{local_gluing}, and \ref{lem:mutualcap} in preparation for the main result in Theorem \ref{thm:MCMFMC}.


\begin{lemma} \label{flow_equality}
  Given a flow $f$, net flow across any cut is equal to the flow value of $f$.
\end{lemma}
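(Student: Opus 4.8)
The plan is to reduce the statement to the familiar telescoping argument for single-commodity flows, carried out componentwise since flows take values in $\mathbb{R}^k$ and conservation is a vector identity. Fix an arbitrary $s$--$t$ cut with node partition $(S,T)$, so that $s \in S$ and $t \in T$. The key idea is not to examine the cut arcs directly, but to sum the conservation constraint over every node $v \in S$.

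First I would write down the identity $\sum_{v \in S}\left(\sum_{a \in \partial_-^{-1}(v)} f(a) - \sum_{a \in \partial_+^{-1}(v)} f(a)\right) = 0$, which holds because each inner parenthesized term vanishes by flow conservation. Crucially, in an enhanced network conservation holds at \emph{every} node, including $s$ and $t$, since the return arc $e$ carries the flow back; this is what makes the summation over $S$ legitimate. Next I would regroup this double sum by arc rather than by node. An arc with both endpoints in $S$ appears once with a $+$ sign (at its tail, via $\partial_-^{-1}$) and once with a $-$ sign (at its head, via $\partial_+^{-1}$), so its contribution cancels; an arc with neither endpoint in $S$ never appears. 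What survives are exactly the arcs crossing the cut: each forward arc (tail in $S$, head in $T$) contributes $+f(a)$, and each backward arc (tail in $T$, head in $S$) contributes $-f(a)$.

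The one arc needing separate care is the return arc $e=(t,s)$, which is a backward arc over \emph{every} $s$--$t$ cut since $t \in T$ and $s \in S$ always. Isolating its contribution $-f(e)$, the surviving identity rearranges to $\sum_{\text{forward } a} f(a) - \sum_{\text{backward } a \neq e} f(a) = f(e)$. The left-hand side is precisely the net flow across the cut, with forward arcs adding and backward arcs subtracting, matching the sign convention of Definition~\ref{def:cut}. The right-hand side $f(e)$ is the flow value of $f$ by definition, and it manifestly does not depend on the chosen partition $(S,T)$. Since the cut was arbitrary, every cut yields the same net flow, equal to $f(e)$, which establishes the claim.

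I expect the only genuine obstacle to be careful sign and orientation bookkeeping: confirming that the forward/backward split produced by summing over $S$ agrees with the oriented forward/backward convention of Definition~\ref{def:cut}, and verifying that $e$ is backward over every cut so that its contribution is consistently the flow value rather than being inadvertently double-counted or misoriented. The remaining ingredients --- the cancellation of arcs internal to $S$ and the componentwise validity of the argument in $\mathbb{R}^k$ --- are routine.
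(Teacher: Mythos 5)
Your proof is correct. It takes a different (and more self-contained) route than the paper: the paper's proof simply decomposes the $\mathbb{R}^k$-valued flow into $k$ single-commodity flows and invokes the classical fact that a single-commodity flow has equal net value across every cut, then reassembles the components into a vector. You instead prove that classical fact from first principles --- summing the conservation identity over all nodes of $S$, regrouping by arc so that arcs internal to $S$ cancel, and isolating the return arc $e$ to identify the surviving quantity with the flow value $f(e)$ --- and your observation that the whole telescoping argument is a vector identity means no explicit decomposition by commodity is needed. The two arguments have the same mathematical core (your telescoping computation is precisely the standard proof of the single-commodity statement the paper cites), so what your version buys is completeness and explicit sign/orientation bookkeeping, including the check that $e$ is backward over every cut and that conservation holds at $s$ and $t$ in the enhanced network; what the paper's version buys is brevity by leaning on a known result. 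Both are valid.
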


\begin{proof}
  Decompose $f$ into separate flows in each commodity in single commodity networks.
  Then for each commodity, net flow is the same over each cut in the network.
  Hence net flow of $f$ is the same over each cut, as net flow of $f$ over a cut is exactly a vector returning the net flow in each commodity.
\end{proof}

\begin{lemma} \label{local_gluing}
  A set of compatible local flows admits a local flow over the union of their arc sets and cut sets.
  \end{lemma}

\begin{proof}
  For each arc in the union we find a contributing local flow which assigns that arc a value, then assign that value to the arc.
  Compatibility guarantees that all local flows which assign that arc a value assign it the same one.
  The flow so admitted, called the gluing of the local flows, is the unique local flow over the union of the arc sets and cut sets which, when restricted to the arc set and cut set of one of the contributing flows, returns that local flow.

  Suppose $\{L_i\}_{i\in I}$ is a set of compatible local flows, and suppose $L$ is the product of the above gluing operation.
  We must show that $L$ is a local flow over the union of arc sets and that $L$ reduces to $L_i$ when restricted to its arc set and cut set. 

  Because the local flows are compatible, each cut in the union of cut sets has the same value, the value each assigns to $e$.
  Furthermore, compatibility guarantees that $L$ does not assign multiple values to any arc.
  Because it assigns a value to an arc if and only if at least one local flow assigns a value, it assigns values to arcs if and only if they are in the union of the local flows' arc sets.
  Thus $L$ is a local flow over the union of the cut sets of the contributing local flows.
  Furthermore, this assignment is uniquely determined.

  Likewise, the gluing condition guarantees that $L$ agrees with each $L_i$ over all arcs to which $L_i$ assigns values, so $L$ restricted to the arc and cut sets of $L_i$ gives exactly the assignment $L_i$.
\end{proof}

\begin{lemma}[Mutual Capacity] \label{lem:mutualcap}
  The mutual capacity $U_M$ of a set of cuts is the set of local flows which can be constructed as the gluing of compatible local flows from each cut in the set of cuts.
\end{lemma}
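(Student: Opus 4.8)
The plan is to prove the stated identity by exhibiting two inclusions between the enriched mutual capacity $U_M(\mathcal{C})$ and the set $G(\mathcal{C})$ of all local flows obtainable by gluing a compatible family of single-cut local flows, one drawn from each $c \in \mathcal{C}$. Reading $U_M(\mathcal{C})$ as the enriched intersection of Definition \ref{def:EnrichInt}, its elements are exactly the consistent realizations recorded over all cuts: an attainable flow value $f$ together with an assignment to $E(\mathcal{C})$ that respects capacities and has net value $f$ over each $c$. The first observation I would make is that, by Definition \ref{def:EnrichMsum}, a single-cut local flow is precisely a capacity-respecting vector assignment to $E(c)$ together with its induced value on $e$; so the pieces being glued in the description of $G(\mathcal{C})$ are exactly the realizations tracked by the enriched capacity of each individual cut.

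For the inclusion $G(\mathcal{C}) \subseteq U_M(\mathcal{C})$, I would begin with a compatible family $\{\phi_c\}_{c\in\mathcal{C}}$ and apply Lemma \ref{local_gluing} to form its gluing $L$. That lemma gives at once that $L$ is a local flow over the union of the arc sets, $\bigcup_c E(c) = E(\mathcal{C})$, and over the union of the cut sets, namely $\mathcal{C}$ itself, and that $L$ restricted to each $c$ returns $\phi_c$. Capacity feasibility of $L$ is immediate, since $L(a) = \phi_c(a) \in C_a$ for any $c$ whose arc set contains $a$; and because $L$ is a local flow over $\mathcal{C}$, it assigns a single value $f = L(e)$ that equals its net flow across every $c \in \mathcal{C}$ (this is the value clause in the definition of a local flow, which the compatible family shares on $e$). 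Thus $L$ is a consistent realization of $f$ across all of $\mathcal{C}$ and therefore lies in $U_M(\mathcal{C})$.

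For the reverse inclusion $U_M(\mathcal{C}) \subseteq G(\mathcal{C})$, I would take any $L \in U_M(\mathcal{C})$, that is, an assignment to $E(\mathcal{C})$ respecting capacities with common net value $f$ over every cut, and set $\phi_c := L|_{E(c)}$ for each $c$. Each $\phi_c$ is a genuine single-cut local flow: it inherits the capacity constraints from $L$, and its value on $e$ equals the net value of $L$ across $c$, which is $f$. Since all the $\phi_c$ are restrictions of one global assignment, they agree on every shared arc and assign the common value $f$ to $e$, so the family $\{\phi_c\}$ is compatible; by the uniqueness clause of Lemma \ref{local_gluing}, the gluing of this family is $L$ itself. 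This exhibits $L$ as a gluing of a compatible family, and combined with the previous paragraph yields $U_M(\mathcal{C}) = G(\mathcal{C})$.

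The step I expect to be the main obstacle is not any single calculation but the clean reconciliation of the definitions: one must first commit to reading $U_M(\mathcal{C})$ as the enriched intersection so that its members are local flows rather than bare flow values, and then verify that restriction and gluing are mutually inverse operations preserving both the capacity constraints and the cut-value condition on every cut simultaneously. The assembly direction is supplied by Lemma \ref{local_gluing}; the subtle point on the restriction side is that compatibility of the pieces is \emph{automatic} because they arise from a single global assignment, which is precisely what makes restriction a genuine inverse to gluing. Once this correspondence is pinned down, both inclusions reduce to routine restriction-and-reassembly arguments.
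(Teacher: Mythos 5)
Your proposal is correct and follows essentially the same route as the paper: one inclusion by restricting an element of $U_M(\mathcal{C})$ to each cut and noting the restrictions form a compatible family gluing back to the original, and the other by invoking Lemma \ref{local_gluing} to see that any gluing of compatible single-cut local flows is itself a local flow over $\mathcal{C}$ and hence lies in $U_M(\mathcal{C})$. Your version is simply more explicit than the paper's about reading $U_M$ as the enriched intersection and about checking the capacity and cut-value conditions, which is a welcome clarification rather than a departure.
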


\begin{proof}
  Suppose $f$ is in the mutual capacity of a set of arcs.
  Then, by definition, $f$ is a local flow over the set of cuts.
  Let $f_i$ be the local flow obtained by restricting $f$ to cut $i$.
  Note that $\{f_i\}$ is a compatible set of local flows which glue to $f$.

  Thus each element of $U_M$ is a local flow in the gluing of local flows over each cut.

  Suppose $f$ is the gluing of a set of local flows over each cut in a collection of cuts.
  Then $f$ is a local flow over the collection of cuts, and hence in $U_M$.

  Thus a local flow is in $U_M$ if and only if it is the gluing of a set of compatible local flows from each cut in the set of cuts.
\end{proof}


\begin{theorem} \label{thm:MCMFMC}
  The set of feasible flows through a network is precisely the mutual capacity of all cuts in the network.
\end{theorem}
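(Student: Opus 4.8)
The plan is to prove the two inclusions separately, first identifying, via Lemma~\ref{lem:mutualcap}, the mutual capacity $U_M(\mathcal{C})$ of the family $\mathcal{C}$ of \emph{all} $s$--$t$ cuts with the set of local flows over $\mathcal{C}$ that arise as gluings of compatible local flows. The useful observation is that the arc set $E(\mathcal{C})$ is the entire (real) arc set of the network: every arc $(u,v)$ with $u \neq t$ and $v \neq s$ is a forward arc of the cut that places $u$ (together with $s$) in $S$ and $v$ (together with $t$) in $T$. Consequently an element of $U_M(\mathcal{C})$ is nothing more than an assignment of a value to every arc that (i) respects all capacity constraints and (ii) produces the same net value $\phi_L(e)$ across every cut. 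The theorem then reduces to the claim that, among capacity-respecting arc assignments, the flow-conservation condition (defining feasible flows) is equivalent to the equal-value-across-every-cut condition (defining elements of $U_M$).

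First I would prove that every feasible flow lies in the mutual capacity. Given a feasible flow $\phi$, Lemma~\ref{flow_equality} guarantees that its net value across each cut equals its flow value, so the restrictions $\{\phi|_{E(c)}\}_{c \in \mathcal{C}}$ all assign the same value to $e$ and, being restrictions of a single function, agree on every shared arc; hence they form a compatible family. By Lemma~\ref{local_gluing} they glue to a local flow over $\mathcal{C}$ which, read off on $E = E(\mathcal{C})$, is exactly $\phi$, and by Lemma~\ref{lem:mutualcap} this local flow belongs to $U_M(\mathcal{C})$.

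The substantive direction is the reverse containment. Take $\phi_L \in U_M(\mathcal{C})$, viewed as a local flow over all cuts; it already assigns capacity-respecting values to every arc, so it remains only to verify flow conservation at each node. Here I would use a node-swap argument: fix an internal node $v \neq s,t$ and compare the two cuts $C$ and $C'$ that are identical except that $v \in T$ in $C$ and $v \in S$ in $C'$. Passing from $C$ to $C'$, the only arcs that change status are those incident to $v$, and a short bookkeeping of forward and backward contributions shows that the change in net cut value equals precisely $(\text{outflow of } v) - (\text{inflow of } v)$. Since $\phi_L$ assigns the same value $\phi_L(e)$ across both cuts, this difference must vanish, yielding conservation at $v$; conservation at $s$ and $t$ then follows from the global balance enforced by the return arc $e$. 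Hence $\phi_L$ is a feasible flow, and the two inclusions together give the asserted equality.

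The main obstacle is exactly this reverse direction, namely showing that equality of the net value across \emph{every} cut forces conservation at \emph{every} node. The delicate points are (a) arranging, for each internal node, a pair of cuts that differ only by the side assigned to that single node, so that the comparison isolates the flow balance at $v$, and (b) tracking signs carefully as the arcs incident to $v$ switch among forward, backward, and excluded status when $v$ crosses the cut, so that the telescoped difference of cut values collapses cleanly to the nodal imbalance. Once this node-swap computation is carried out, the equivalence between conservation and equal cut value is established and the duality result follows.
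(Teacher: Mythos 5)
Your proposal is correct and follows essentially the same route as the paper: the forward inclusion by restricting a feasible flow to each cut and invoking the gluing lemma, and the reverse inclusion by the node-swap argument comparing the two cuts that differ only in which side contains a given node $\nu$, so that equality of net cut values forces conservation at $\nu$. Your version is, if anything, slightly more careful on two points the paper glosses over --- justifying equal net values across cuts directly from the definition of a local flow (rather than citing Lemma~\ref{flow_equality}, which is stated for genuine flows) and handling conservation at $s$ and $t$ via the return arc $e$ --- but the underlying argument is the same.
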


\begin{proof}
  We show that a flow is feasible if and only if it is in the mutual capacity of all cuts in the network.

  Suppose a flow $f$ is feasible.
  The restriction $f|_c$  of $f$ to each cut $c \in \mathcal{C}$, the set of cuts in the network, gives a local flow over that cut.
  Then $f$ is the gluing of all $\{f|_c\}_{c \in \mathcal{C}}$ (by Lemma \ref{local_gluing}) , and hence in the mutual capacity of all cuts.

  Suppose $f$ is in the mutual capacity of all cuts.
  Then $f$ respects all capacity constraints, so all that remains to show is that the flow respects conservation.
  For each node $\nu$ and each commodity $\kappa$, take $C_1$ to be a cut with cut set including the in-arcs of $\nu$ for commodity $\kappa$, and take $C_2$ to have the same cut set but including the out-arcs of $\nu$ for commodity $\kappa$ rather than the in-arcs, i.e., the same cut but with $\nu$ moved from the $t$-set to the $s$-set.

  Since the net flow across each cut must be equal by Lemma \ref{flow_equality}, we have that
  \[ \sum_{a \in C_1} \phi(a) = \sum_{b \in C_2} \phi(b),\]
  and hence
  \[ \sum_{a \in \text{ in-arcs of } \nu} \phi(a) + \sum_{c \in C_1 \cap C_2} \phi(c) = \sum_{b \in \text{ out-arcs of } \nu} \phi(b) + \sum_{c \in C_1\cap C_2} \phi(c),\]
  giving
  \[\sum_{a \in \text{ in-arcs of } \nu} \phi(a) = \sum_{b \in \text{ out-arcs of } \nu} \phi(b).\]
  This result holds for each commodity $\kappa \in \{1,\dots,k\}$ and at each node $\nu \in V$.
  Hence flow conservation is respected for each commodity at all nodes, and hence over all commodities.
  Therefor $f$ is a feasible flow.
  
  By Lemma \ref{lem:mutualcap}, this gives that the set of feasible flows is exactly the mutual capacity over all cuts.
\end{proof}

\subsection{Min-Cut Algorithm} \label{ssec:MCAlgo}
We consider computational implications of the tight duality result in Theorem \ref{thm:MCMFMC}.
While identifying the minimum cut in our multicommodity setting is expected to be expensive, we explore situations where computing the mutual capacity may turn out to be more efficient than exhaustive enumeration.

Let $X_i$ be the capacity semimodule assigned to arc $i$, with $x_i$ an arbitrary element of $X_i$.
We consider the enriched Minkowski sum in Definition \ref{def:EnrichMsum}
as a more informative version of the Minkowski sum over a cut-set $\mathcal{C}_j=\{a_\kappa\}$ the function $f_j(x_1,x_2,\cdots,x_m)=\Sigma_{\kappa} x_\kappa$.
In the worst case, this is no more computationally expensive than the Minkowski sum, since one must already calculate all element-wise sums.
We note that there are some special cases where the Minkowski sum is less computationally expensive, most notably in the case of convex polyhedra, but in most cases the expense is comparable.

The intersection of these objects is their intersection as graphs of functions over $\Pi_{i\in I} X_i$, or, equivalently, the set of all  $(x_1,x_2,\cdots,x_m)$ such that all $f_j((x_1,x_2,\cdots,x_m))$ are equal.

Note that this algorithm cannot run in sub-exponential time since a fully connected graph will have $2^{n-2}$ edge-minimal $s$--$t$ cuts, and we must intersect over all of them.
However, note that this is the case for Krishnan's special cases \cite{Kr20??final} as well.
On the other hand, these computations can be done on sparse networks with less expense.

To find a realization of a particular flow value, it will suffice to find an element in the levelset of that value in the intersection of the graphs, or, again equivalently, an equalizer for all cuts.
We illustrate this computation on a practical example.
We finish this section by presenting a class of networks on which our algorithm can run exponentially faster (in a capacity parameter) than brute force search.

\begin{example} \label{ex:duality}

\begin{figure}[ht!]
  \centering
  \includegraphics[scale=1.1]{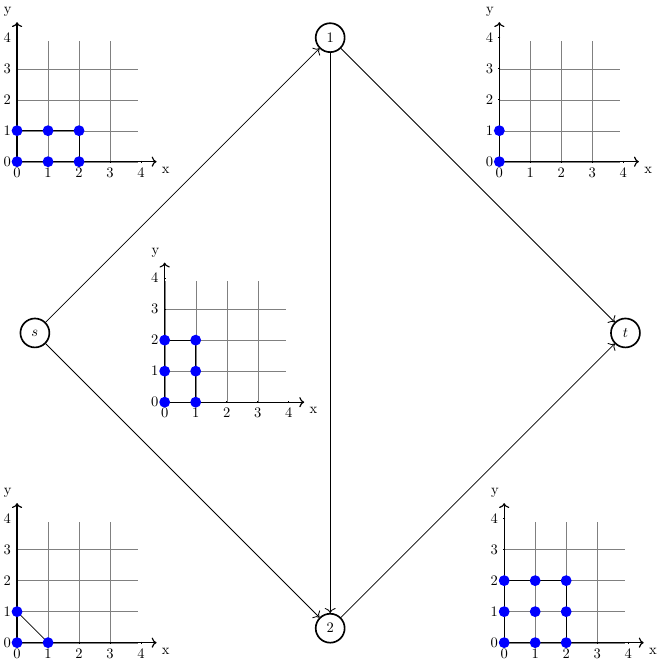}
  \caption{Network considered in Example \ref{ex:duality}.}
  \label{fig:duality}
\end{figure}

Consider a network with capacities given as sets of integer vectors in Figure \ref{fig:duality}.

First we find the local flows over each cut by listing all combinations of flow values that can be assigned to arcs in that cut.
Then we calculate the flow value over the cut for each combination by adding flow values on forward arcs and subtract flow values on backward arcs in  the cut.
This gives the net flow value for each local flow over the cut.

To glue together the local flows over two cuts, we find all combinations of local flows that agree on all arcs the two cuts share (including the net flow as assigned to the edge $e$ from $t$ to $s$).
We then construct the local flows over the collection of cuts by creating for each pair of compatible local flows discovered a local flow that assigns to each arc the value which at least one of the contributing cuts assigns.
For instance, the local flow that assigns $(0,1)$ to $s$--$2$, $(1,0)$ to $1$--$2$, and $(0,1)$ to $1$--$t$ can be glued to the local flow that assigns $(0,1)$ to $s$--$2$ and $(1,1)$ to $s$--$1$ to give a local flow over both cuts assigning $(0,1)$ to $s$--$2$, $(1,1)$ to $s$--$1$, $(1,0)$ to $1$--$2$, and $(0,1)$ to $1$--$t$.

We construct the set of global flows over the network by starting from the first cut, gluing the second cut to it, gluing the third cut to the local flows over the first two cuts, and iterating through all cuts of the graph.
The result, a local flow over all cuts, is precisely the set of feasible flows on the graph.
\end{example}

\begin{example}
  Our algorithm can greatly outperform brute force approaches.
  Consider the network in Figure \ref{ex:multi_com}.
  Note that $(s,v_1)$ has six elements in its capacity.
  Numbering the cuts 1 to 3 from left to right, the gluing of cut 2 to cut 1 compares each of 6 objects to each of $4(U+1)$ objects (local flows over cut 2, giving $24(U+1)$ operations), resulting in 6 local flows over cuts 1 and 2.
  Gluing cut 3 on requires another comparison of 6 objects to $4(U+1)$ objects, for a total of $48(U+1)$ operations.

  The brute force method, on the other hand, requires examining $6 \times 4(U+1) \times 4(U+1) = 96(U+1)^2$ coordinates and checking that inflow equals outflow at all nodes.
  This results in $384(U+1)^2$ operations, $8U$ times that required for our gluing algorithm.

  \begin{figure}[ht!]
    \centering
    \begin{tikzpicture}[
	roundnode/.style={circle, draw=green!30, fill=green!5, very thick, minimum size=7mm, scale=.7},
	roundnode2/.style={circle, draw=black!100, fill=green!0, thick, minimum size=7mm, scale=.7},
	squarednode/.style={rectangle, draw=red!60, fill=red!5, very thick, minimum size=5mm},
	el/.style = {inner sep=2pt, align=left, scale=.7},
      ]
      \node[roundnode2] at (0,0) (s) {$s$};
      \node[roundnode2] at (3,0) (v1) {$v_1$};
      \node[roundnode2] at (5,0) (v2) {$v_2$};
      \node[roundnode2] at (7,0) (t) {$t$};
      \path
      (s) edge [->] node[el,above] {$(0 \leq x+y \leq 2)$} (v1)
      (v1) edge [->] node[el,above] {$x\in [0,U]$\\$y\in [0,3]$ } (v2)
      (v2) edge [->] node[el, above] {$y\in [0,U]$\\$x\in [0,3]$} (t);

    \end{tikzpicture}
    \caption{A multicommodity network.
      Capacities are restricted to integer values of each commodity within the intervals listed on the edges.
      The parameter $U$ is a large number.
      }
    \label{ex:multi_com}
  \end{figure}
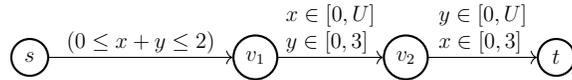
\end{example}

\section{Framework Based on the Work of Krishnan}\label{sec:Sheaf}
We briefly discuss a theorem presented by Krishnan \cite{Kr20??final} and discuss a way to frame the concept of mutual capacity using the structures through which he studied this problem.

Krishnan \cite{Kr20??final} provides two cases where the duality gap disappears and the maximum flow and minimum cut coincide.
First, if each $s$--$t$ path passes through a local minimum, i.e., an arc whose  capacity equals the intersection of all capacities for arcs along that path, we can avoid the duality gap.
Second, the duality gap vanishes if we can model the capacities as sufficiently nice objects in the sense of the following theorem. 

\begin{theorem}(\cite[Corollary~7.12]{Kr20??final}) \label{thm:Semi_MFMC}
Consider the following data.

(1) Lattice-ordered flat $\mathbb{N}$-semimodule $M$.

(2) Finite $M$-weighted digraph $(X;\omega)$ with distinguished edge $e$.

Then $\sup_{\text{flow  } \phi}\phi(e) = \inf_{\text{e-cut }C}\Sigma_{c\in C}\omega_c$.
\end{theorem}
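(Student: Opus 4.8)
The plan is to prove this scalar max-flow min-cut equality by separating it into the easy inequality $\sup_{\text{flow } \phi}\phi(e) \le \inf_{\text{e-cut } C}\sum_{c\in C}\omega_c$ and the hard reverse inequality, and to route the hard direction through the tight duality of Theorem \ref{thm:MCMFMC} rather than through sheaf cohomology. For the easy direction I would first establish the semimodule analogue of Lemma \ref{flow_equality}: any flow $\phi$ takes the same net value across every $e$-cut, namely $\phi(e)$. Since the weight $\omega_c$ of each arc in a cut dominates the portion of $\phi$ carried by that arc in the lattice order, summing over the cut gives $\phi(e) \le \sum_{c\in C}\omega_c$ for every cut $C$; taking sup on the left and inf on the right yields the weak-duality inequality with essentially no use of the lattice-ordered or flat hypotheses.

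The reverse inequality is where the structural hypotheses must be paid for, and my strategy is to show that ``lattice-ordered'' is precisely the condition that closes the duality gap studied throughout this paper. Recall from the discussion of previous work that the gap between total capacity and the achievable region arose exactly because the Minkowski sum fails to distribute over intersection, $(A +_M B)\cap (A +_M C) \neq A +_M (B\cap C)$. In a lattice-ordered semimodule the corresponding identity
\[ x + (y \wedge z) = (x+y)\wedge(x+z) \]
holds, with $\wedge$ the lattice meet. I would use this distributivity to argue that the total capacity (the intersection of cut capacities) coincides with the mutual capacity $U_M$ of Theorem \ref{thm:MCMFMC}: when sums commute with meets, the ``double booking'' obstruction that separated $U_M$ from the straight intersection can no longer occur, since the value a cut forces on a shared arc is determined by a meet that lifts consistently across cuts. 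With the gap closed, Theorem \ref{thm:MCMFMC} says every value dominated by $\inf_C \sum_{c\in C}\omega_c$ is realized by a genuine feasible flow, which forces $\sup_\phi \phi(e) \ge \inf_C \sum_{c\in C}\omega_c$, and the two inequalities combine to the claimed equality.

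The step I expect to be the main obstacle is converting the pointwise distributive identity into the global statement that the meet taken over all cuts is realized by one consistent flow assignment; this is exactly the analogue of the gluing argument (Lemmas \ref{local_gluing} and \ref{lem:mutualcap}) carried out in an abstract semimodule, and I anticipate that the \emph{flat} hypothesis is what guarantees the meet can be ``lifted'' to an actual flow without a torsion-type cancellation obstruction. A secondary subtlety is whether the $\sup$ and $\inf$ are attained, so that the equality is literal rather than holding only up to order-closure; here I would either appeal to finiteness of the digraph together with completeness of the lattice, or give a direct Ford--Fulkerson-style augmenting argument that pushes flow along paths whose residual capacities are computed as meets and terminates at a saturated cut whose weight-sum equals the flow value, which would deliver the equality from first principles and whose crux would again be the bottleneck/meet bookkeeping.
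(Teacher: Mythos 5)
This statement is not proved in the paper at all: it is quoted verbatim as Corollary~7.25 of Krishnan \cite{Kr20??}, where it is established by sheaf-theoretic means (directed cohomology and a Poincar\'e--Lefschetz-type duality). The paper's own contribution in Section~\ref{sec:Sheaf} runs in the opposite direction from your plan: it \emph{constructs} a lattice-ordered flat $\mathbb{N}$-semimodule $\Mnd$ out of the projection functions $\pi_i$ so that this imported theorem can be \emph{applied} to the multicommodity problem, rather than deriving the theorem from Theorem~\ref{thm:MCMFMC}.

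The central gap in your proposal is the reduction of the hard inequality to Theorem~\ref{thm:MCMFMC}. That theorem lives in the concrete setting where capacities are subsets of $\mathbb{R}^k$ combined by Minkowski sum and intersection, and the paper explicitly observes (Section~\ref{ssec:Previous}) that this structure is \emph{not} lattice-ordered, since $(A+_MB)\cap(A+_MC)\neq A+_M(B\cap C)$ in general. So the hypotheses of Theorem~\ref{thm:Semi_MFMC} and the setting of Theorem~\ref{thm:MCMFMC} are essentially disjoint regimes, and Theorem~\ref{thm:MCMFMC} equates the feasible flows with the \emph{mutual} capacity, not with the total capacity $\inf_C\sum_{c\in C}\omega_c$ appearing in the statement; the whole point of Examples~\ref{duality_gap_example} and~\ref{ex:Gap_example} is that these differ. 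Your bridge between them --- ``when sums commute with meets, the double-booking obstruction can no longer occur'' --- is precisely the content of the theorem, and you assert it rather than prove it. Converting the binary distributive law $x+(y\smallvee z)=(x+y)\smallvee(x+z)$ into the statement that the meet over \emph{all} $e$-cuts is realized by a single consistent flow is a genuine induction over the network (and it is exactly where flatness enters in Krishnan's argument, to guarantee that the relevant equalizers of weighted sums are preserved); you correctly flag this as the main obstacle but leave it open. The weak-duality half of your argument is fine, modulo stating the semimodule analogue of Lemma~\ref{flow_equality} and using order-compatibility of $+_{\Mnd}$. As written, then, the proposal identifies the right pressure points but does not constitute a proof, and its proposed dependency on Theorem~\ref{thm:MCMFMC} inverts the logical relationship the paper actually uses.
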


We wish to build such a lattice-ordered flat $\mathbb{N}$-semimodule which will model our version of the multicommodity flow problem.
To this end, we first introduce definitions on and certain properties that such a set must satisfy.
All definitions in this section are taken from Krishnan's paper \cite{Kr20??final}.


\begin{definition}[Lattice]
  A lattice is a poset $S$ such that each pair of elements $s_1,s_2$ admits a least upper bound  $s_1 \smallwedge s_2$ and a greatest lower bound $s_1\smallvee s_2$.
  A lattice is complete if there exist a minimum $0_S$ and a maximum $1_S$.
\end{definition}

\begin{definition}[Partial monoid]
  A partial monoid is a set $\Mnd$ with operation $+_{\Mnd} : \Mnd \times \Mnd \rightarrow \Mnd$ and element $0_{\Mnd}$ such that in the following equations both sides are defined and coincide if one side is defined:
\begin{align*}
0_{\Mnd}+_{\Mnd} x & = x ~\mbox{ and } \\
x+_{\Mnd} (y+_{\Mnd} z) & = (x +_{\Mnd} y)+_{\Mnd} z.
\end{align*}
\end{definition}

\begin{definition}[$\mathbf{S}$-semimodule]
  A partial commutative monoid $\Mnd$ is an $S$-semimodule provided there exists a form of scalar multiplication $\times_{\Mnd} : S \times \Mnd \rightarrow \Mnd$ such that for the following equations both sides are defined and coincide if one side exists:
\begin{align*}
  (\lambda_1 \times_S \lambda_2)\times_{\Mnd} x & = \lambda_1 \times_{\Mnd} (\lambda_2\times_{\Mnd} x); \\
  (\lambda_1 +_S \lambda_2)\times_{\Mnd} x & = (\lambda_1\times_{\Mnd} x)+_{\Mnd} (\lambda_2\times_{\Mnd}  x); \\
  \lambda \times_{\Mnd}(a+_{\Mnd} b) & = (\lambda \times_{\Mnd} a)+_{\Mnd} (\lambda \times_{\Mnd} b); \\
  1_S\times_{\Mnd}  x & = x; \\
  0_S\times_{\Mnd} x & = 0_{\Mnd};~\mbox{ and } \\
  \lambda \times_{\Mnd} 0_{\Mnd} & = 0_{\Mnd}.
\end{align*}
\end{definition}

\begin{definition}[Partially Ordered Commutative Monoid]
  A partially ordered commutative monoid $\Mnd$ is a commutative monoid with a reflexive, transitive, antisymmetric partial order $\leq_{\Mnd}$ such that $a+_{\Mnd} b\leq_{\Mnd} a+_{\Mnd} c$ whenever $b\leq_{\Mnd} c$.
\end{definition}

\begin{definition}[lattice ordered commutative monoid]
A lattice ordered commutative monoid $\Mnd$ is a complete (i.e. the underlying poset is a complete lattice) partially ordered partial monoid such that $a+_{\Mnd} b\smallvee a+_{\Mnd} c=a+_{\Mnd}(b\smallvee c)$.
\end{definition}

\subsection{Construction} \label{ssec:sheafcnstn}

Now let $\Mnd$ be constructed as follows: $c_1, c_2, \dots, c_m$ be the capacities of the $m$ arcs in a network (these will be regions in $\mathbb{R}^k$ in the network with $k$ commodities), and let $\mathcal{C} = c_1 \times c_2 \times \dots \times c_m$.
Let $\{\pi_1, \pi_2, \dots, \pi_m\} \in \Mnd$, with $\pi_i=$ the canonical projection onto the $i$th component of $\mathcal{C}$.
This captures the capacity restrictions on each arc.
The image of $\mathcal{C}$ under $\pi_i$ gives the set of feasible flow values that can flow through arc $i$, and the level set for any value gives the set of flows which achieve that value on that arc.

For $f_1, f_2 \in \Mnd$, let $f_1+_{\Mnd} f_2=f_3$ be standard function addition, and let $(f_1\cdot_{\Mnd} f_2)(c)=f_1(c)$ if $f_1(c)=f_2(c)$, and $-\infty$ otherwise. Let $f_1 \leq_{\Mnd} f_2$ if and only if $f_1(c)=f_2(c)$ or $-\infty$ for all $c\in \mathcal{C}$, or if $f_2=\infty$ $\forall c\in \mathcal{C}$.

Addition is essentially the same idea as taking Minkowski sums of the images of the functions, except it requires the vectors to be associated with at least one of the same flows.
Multiplication, which will be the greatest lower bound in our lattice, captures the values and flows which satisfy both functions at the same time.
A function $f$ is less than or equal to another function provided they agree wherever $f$ is defined.
We also include an ``infinity'' function which is greater than or equal to all functions.

Let $g(c)=\infty$ for all $c$, and let $0_{\Mnd}=s$ where $s(c)=0$ for all $c\in \mathcal{C}$. 
We need a universal upper and lower bound, which we get in the form of $g$ and $-g$, respectively.
We also need the $0$ flow, which we get in the form of $0_{\Mnd}$.

Let $\Mnd^-$ be the set generated by $\{\pi_1, \pi_2, \dots, \pi_m\}$  under $+_{\Mnd},\times_{\Mnd}$, and function scalar multiplication by $-1$.
Let $\Mnd=\Mnd^-\cup g\cup 0_{\Mnd}$.

Now we will show that $\Mnd$ has all the required properties to satisfy Theorem \ref{thm:Semi_MFMC}.

\begin{lemma}
  The underlying poset of $\Mnd$ is a complete lattice.
\end{lemma}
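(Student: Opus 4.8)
The plan is to prove completeness through greatest lower bounds, using the standard fact that a poset in which \emph{every} subset admits an infimum is automatically a complete lattice: the greatest element appears as the infimum of the empty set, and for any $S \subseteq \Mnd$ the supremum is recovered as $\inf\{u \in \Mnd : u \geq_{\Mnd} s \text{ for all } s \in S\}$, a set that is nonempty precisely because $g$ dominates everything. So it suffices to establish two things, namely that $g$ is the greatest element of $(\Mnd, \leq_{\Mnd})$ and that every subset of $\Mnd$ has an infimum lying in $\Mnd$. The first is immediate from the last clause of the definition of $\leq_{\Mnd}$, which declares $f \leq_{\Mnd} g$ for every $f$.

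To build infima I would read $\leq_{\Mnd}$ as an \emph{information order}: for $f_2 \neq g$ we have $f_1 \leq_{\Mnd} f_2$ exactly when $f_1(c) \in \{f_2(c), -\infty\}$ for all $c$, i.e. $f_1$ agrees with $f_2$ wherever $f_1$ is finite. For a subset $S$ not consisting solely of $g$, I would propose the pointwise-agreement function $h$ given by $h(c) = f(c)$ when all members of $S$ share one common value at $c$, and $h(c) = -\infty$ otherwise. Directly from the order definition, $h$ is a lower bound for $S$ (where $h(c) \neq -\infty$ every member agrees with it, and where $h(c) = -\infty$ the condition is vacuous), and it is the greatest such: any lower bound $\ell$ must, wherever it is finite, agree with every $f \in S$, which forces those members to coincide there and hence $\ell(c) = h(c)$, so $\ell \leq_{\Mnd} h$. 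For a finite family this $h$ is exactly the iterated product $\cdot_{\Mnd}$ of its members, which the generation of $\Mnd^-$ already contains; in particular this confirms that the multiplication $\cdot_{\Mnd}$ is the binary greatest lower bound as intended, and that $g$ interacts correctly (a family containing $g$ together with any finitely valued $f$ meets to the function that is $-\infty$ wherever $f$ is finite).

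The main obstacle is \emph{closure}: showing that the pointwise meet $h$ of an arbitrary, possibly infinite, family genuinely lies in $\Mnd$ rather than in some pointwise completion of it, since $\Mnd^-$ is generated from $\pi_1, \dots, \pi_m$ by only finitely many applications of $+_{\Mnd}$, $\cdot_{\Mnd}$, and negation. I would attack this by arguing that the infimum of any family is already attained by a finite subfamily, so that $h \in \Mnd^-$ and the recovered suprema likewise stay in $\Mnd$; concretely, one would show that the descending agreement loci $\{c : \text{all members of } S \text{ coincide at } c\}$ cut out by the $\pi_i$ stabilize after finitely many meets, and should this stabilization fail to hold cleanly one instead argues directly that every pointwise-agreement function over $\mathcal{C}$ is expressible within $\Mnd$. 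A secondary point to pin down carefully is the least element: the construction advertises $-g$ as the universal lower bound, so I would confirm $-g \in \Mnd$ (for instance as a product $\cdot_{\Mnd}$ of two members that disagree at every point of $\mathcal{C}$) and verify that it indeed sits below the pointwise-$(-\infty)$ functions, since the all-zero point of $\mathcal{C}$ — present whenever every capacity contains the origin — lies in the support of every product of the linear generators and therefore needs separate attention.
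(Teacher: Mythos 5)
Your proposal reads ``complete lattice'' in the standard order-theoretic sense (every subset has an infimum), and this is exactly where the gap lies: the step you yourself flag as ``the main obstacle'' --- that the pointwise-agreement meet $h$ of an arbitrary family actually lies in $\Mnd$ --- is never carried out. You offer two possible attacks (stabilization of the agreement loci after finitely many meets, or direct expressibility of every agreement function inside $\Mnd$), but neither is executed, and neither is obviously true: $\Mnd^-$ is generated from the finitely many projections $\pi_1,\dots,\pi_m$ by finitely many applications of the operations, while the agreement loci of an infinite family are cut out by equalities of piecewise-defined partial functions on $\mathcal{C}\subseteq\R^{mk}$ and need not stabilize. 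Until that closure claim is proved, your argument only identifies $h$ as the infimum in the ambient poset of all partial functions on $\mathcal{C}$, not in $\Mnd$. Your parallel worry about whether $-g$ (the everywhere-$-\infty$ function) belongs to $\Mnd$ is a sharp and legitimate one --- e.g.\ $\pi_1\cdot_{\Mnd}(-\pi_1)$ fails to be $-\infty$ on the zero set of $\pi_1$ --- but it, too, is raised and left unresolved.

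It is worth knowing that the paper sidesteps all of this by using a much weaker definition: a lattice is declared ``complete'' merely when it has a minimum and a maximum, so no infinite meets are ever required. The paper's own proof is a case analysis of the \emph{binary} join only: $f_1\smallwedge f_2$ is the larger of the two when they are comparable, and is $g$ otherwise, on the grounds that incomparability forces a point $c$ with $-\infty\neq f_1(c)\neq f_2(c)\neq-\infty$. (That trichotomy is itself questionable --- two functions finite on disjoint subsets of $\mathcal{C}$ are incomparable yet never disagree at a common finite point --- so your more careful pointwise analysis is, if anything, closer to being right; it simply is not finished.) If you adopt the paper's definition, your burden reduces to binary meets and joins plus exhibiting $g$ as the top and $-g$ as the bottom; but even then the membership of $-g$ in $\Mnd$ still needs the argument you postponed.
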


\begin{proof}
  We claim that $f_1\smallwedge f_2=f_1$ if $f_2\leq_{\Mnd}  f_1$, vice-versa if $f_1\leq_{\Mnd} f_2$, and if neither of those holds $f_1\smallwedge f_2=g$.
  If one function is $\leq_{\Mnd}$ the other, e.g., suppose $f_1 \leq_{\Mnd} f_2$ without loss of generality, then by  definition any function greater than both must be greater than $f_2$, and hence $f_2$ is the lowest upper bound on $f_1,f_2$.
  Otherwise, there exists a $c\in \mathcal{C}$ where $-\infty\neq f_1(c)\neq f_2(c)\neq -\infty$.
  Then there is no function which agrees with both on $c$, and hence only $g\geq_{\Mnd} f_1,f_2$.
\end{proof}

\begin{lemma}
  $f_1\smallvee f_2=f_1\cdot_{\Mnd} f_2$.
\end{lemma}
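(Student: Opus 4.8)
The plan is to verify directly that $f_1\cdot_{\Mnd}f_2$ satisfies the universal property of the greatest lower bound of $f_1$ and $f_2$; recall that in the convention adopted in this paper $\smallvee$ denotes the greatest lower bound (meet), so this is exactly what the statement asks. Unwinding the order relation, $g\leq_{\Mnd}h$ means that at every $c\in\mathcal{C}$ the smaller function $g$ either agrees with $h$ or takes the value $-\infty$. Consequently the entire argument reduces to a pointwise case analysis on the values of the competing functions at an arbitrary $c$, and I expect the only genuine care to lie in reading $\leq_{\Mnd}$ correctly and in handling the distinguished element $g$.

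First I would show that $f_1\cdot_{\Mnd}f_2$ is a common lower bound. Fix $c\in\mathcal{C}$. By the definition of $\cdot_{\Mnd}$, the value $(f_1\cdot_{\Mnd}f_2)(c)$ equals $f_1(c)$ when $f_1(c)=f_2(c)$ and equals $-\infty$ otherwise; in either case it is either $f_1(c)$ or $-\infty$, which is precisely the pointwise condition defining $f_1\cdot_{\Mnd}f_2\leq_{\Mnd}f_1$. By the same computation with the roles of $f_1$ and $f_2$ exchanged, $f_1\cdot_{\Mnd}f_2\leq_{\Mnd}f_2$, so $f_1\cdot_{\Mnd}f_2$ sits below both.

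Next I would establish maximality among lower bounds. Suppose $h\in\Mnd$ satisfies $h\leq_{\Mnd}f_1$ and $h\leq_{\Mnd}f_2$; the goal is $h\leq_{\Mnd}f_1\cdot_{\Mnd}f_2$. Fix $c$. If $h(c)=-\infty$ the required pointwise condition holds at $c$ automatically, so assume $h(c)\neq-\infty$. Then $h\leq_{\Mnd}f_1$ forces $h(c)=f_1(c)$ and $h\leq_{\Mnd}f_2$ forces $h(c)=f_2(c)$, whence $f_1(c)=f_2(c)$ and therefore $(f_1\cdot_{\Mnd}f_2)(c)=f_1(c)=h(c)$. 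Thus at every $c$ we have either $h(c)=-\infty$ or $h(c)=(f_1\cdot_{\Mnd}f_2)(c)$, which is exactly $h\leq_{\Mnd}f_1\cdot_{\Mnd}f_2$. Combining the two steps identifies $f_1\cdot_{\Mnd}f_2$ as the greatest lower bound, i.e. $f_1\smallvee f_2=f_1\cdot_{\Mnd}f_2$.

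The step I expect to be the genuine subtlety is not the pointwise comparison, which is routine, but the interaction with the distinguished elements: for any pair of honest functions in $\Mnd^-$ (which never take the value $+\infty$) the two steps above apply verbatim, whereas the top element $g$, with $g(c)=\infty$, must be read with care since $f\cdot_{\Mnd}g$ is $-\infty$ wherever $f(c)\neq\infty$. I would therefore dispatch the pairs involving $g$ and $0_{\Mnd}$ as separate immediate checks, so that the main line of reasoning stays the clean pointwise argument while the identity is confirmed across the whole of $\Mnd$, as needed for the completeness of the lattice and ultimately for Theorem \ref{thm:Semi_MFMC}.
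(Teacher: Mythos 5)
Your proof is correct and follows essentially the same pointwise case analysis as the paper, which records only the maximality step (any common lower bound $f_3$ must send every $c$ where $f_1$ and $f_2$ disagree to $-\infty$, hence $f_3\leq_{\Mnd} f_1\cdot_{\Mnd} f_2$) and leaves the ``common lower bound'' half implicit, so if anything you are more complete. You are also right that the distinguished element $g$ is the one delicate point: the paper's argument silently ignores the clause making $f\leq_{\Mnd} g$ for every $f$, under which $f\cdot_{\Mnd} g$ (constantly $-\infty$) would not be the greatest lower bound of $f$ and $g$, so your instinct to treat those pairs separately is the more careful reading.
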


\begin{proof}
  Suppose $f_3\leq_{\Mnd} f_1$ and $f_3\leq_{\Mnd} f_2$.
  Then for all  $c\in \mathcal{C}$ $f_3$ agrees with $f_1$ or maps to $-\infty$ and $f_3$ agrees with $f_2$ or maps to $-\infty$.
  Then $f_3$ maps all values on which $f_1$ and $f_2$ don't agree to $-\infty$, so $f_3\leq_{\Mnd} f_1\cdot_{\Mnd} f_2$.
\end{proof}

\begin{lemma}
  $\Mnd$ is a partial monoid.
\end{lemma}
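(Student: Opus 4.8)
The plan is to verify directly the two defining axioms of a partial monoid for the operation $+_{\Mnd}$ given by pointwise function addition together with the element $0_{\Mnd} = s$, where $s(c) = 0$ for all $c \in \mathcal{C}$. Since every element of $\Mnd$ is a function $\mathcal{C} \to \R \cup \{\pm\infty\}$ (with $g$ the constant $\infty$, the generators $\pi_i$ finite-valued, and the meet $\cdot_{\Mnd}$ capable of producing the value $-\infty$), the operation is genuinely partial: it should be declared defined at a point $c$ exactly when the pointwise sum avoids the indeterminate form $\infty + (-\infty)$, and $f_1 +_{\Mnd} f_2$ is defined precisely when it is defined at every $c \in \mathcal{C}$.

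First I would establish closure, i.e., that whenever $f_1 +_{\Mnd} f_2$ is defined it again lies in $\Mnd$. By construction $\Mnd^-$ is closed under $+_{\Mnd}$, so the only cases needing attention involve the adjoined elements $g$ and $0_{\Mnd}$. For any $f$, the sum $0_{\Mnd} +_{\Mnd} f$ is defined and returns $f$ unchanged since $0 + f(c) = f(c)$ at each $c$ (including $c$ where $f(c) = \pm\infty$); and whenever $g +_{\Mnd} f$ is defined—that is, whenever $f$ takes no value $-\infty$—it equals $g$, because $\infty$ absorbs any finite summand. Both results lie in $\Mnd$. This simultaneously disposes of the identity axiom, as $0_{\Mnd} +_{\Mnd} x = x$ then holds for every $x \in \Mnd$ and is always defined.

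The remaining axiom is associativity in the partial sense: whenever one of $(x +_{\Mnd} y) +_{\Mnd} z$ and $x +_{\Mnd} (y +_{\Mnd} z)$ is defined, so is the other, and the two coincide. Here I would argue pointwise. At each $c \in \mathcal{C}$, if one grouping is defined then no $\infty + (-\infty)$ collision arises in forming it, which forces the triple $x(c), y(c), z(c)$ to contain at most one sign of infinity; ordinary associativity of addition in $\R$ (extended by $\pm\infty$ in the absence of the indeterminate form) then makes the other grouping defined at $c$ with the same value. Since this holds at every point of $\mathcal{C}$, the two groupings agree as elements of $\Mnd$, and closure of $\Mnd^-$ guarantees the common value lies in $\Mnd$.

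The step I expect to be the main obstacle is the careful bookkeeping of the extended-real arithmetic: specifically, confirming that partiality is confined exactly to the $\infty + (-\infty)$ collisions and that no such collision can render one grouping defined while leaving the other undefined at the same point. This is the only place where the definition ``defined on one side implies defined on both'' must be checked rather than merely quoted, and it is the sole point at which the argument does more than invoke associativity of real addition.
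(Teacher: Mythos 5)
Your proposal is correct and follows essentially the same route as the paper: direct verification of the identity and associativity axioms for pointwise function addition with $0_{\Mnd}=s$. The paper's own proof is a one-sentence observation of exactly these two facts; your additional bookkeeping about closure and the $\infty+(-\infty)$ collisions fills in details the paper leaves implicit rather than taking a different approach.
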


\begin{proof}
  We observe that $0_{\Mnd}+_{\Mnd} f=f$ for all $f$, and function addition is associative.
\end{proof}

\begin{lemma}
  $\Mnd$ is a commutative $\mathbb{N}$-semimodule.
\end{lemma}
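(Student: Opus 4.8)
The plan is to reduce the claim to the fact that any commutative monoid carries a canonical $\mathbb{N}$-action by iterated addition, so that the semimodule axioms follow from associativity and commutativity of $+_{\Mnd}$ alone; the only genuine work is bookkeeping for partiality and for closure of $\Mnd$.

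First I would record that $+_{\Mnd}$ is commutative. Since it is defined as ordinary addition of the $\R^k$-valued (possibly $\pm\infty$) functions on $\mathcal{C}$, and addition in $\R^k$ is commutative, $f_1 +_{\Mnd} f_2$ and $f_2 +_{\Mnd} f_1$ agree at every $c \in \mathcal{C}$ wherever either is defined. Combined with the preceding lemma, which gives that $\Mnd$ is a partial monoid with unit $0_{\Mnd} = s$, this shows that $\Mnd$ is a commutative partial monoid.

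Next I would introduce the $\mathbb{N}$-action recursively by $0 \times_{\Mnd} x = 0_{\Mnd}$ and $(n+1)\times_{\Mnd} x = x +_{\Mnd} (n \times_{\Mnd} x)$, so that $n \times_{\Mnd} x$ is the $n$-fold sum of $x$ with itself. Closure must then be checked against $\Mnd = \Mnd^- \cup \{g\} \cup \{0_{\Mnd}\}$. For $x \in \Mnd^-$ the $n$-fold sum remains in $\Mnd^-$, since $\Mnd^-$ is by construction closed under $+_{\Mnd}$; for $x = g$ one has $g +_{\Mnd} g = g$ because $\infty + \infty = \infty$ coordinatewise, whence $n \times_{\Mnd} g = g$ for $n \geq 1$ and $0_{\Mnd}$ for $n = 0$; and $n \times_{\Mnd} 0_{\Mnd} = 0_{\Mnd}$ because $s +_{\Mnd} s = s$.

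Finally I would verify the six semimodule identities, each of which is an instance of the general principle that iterated addition in a commutative monoid is associative and distributes over the monoid operation. The identities $(\lambda_1 \times_S \lambda_2)\times_{\Mnd} x = \lambda_1 \times_{\Mnd}(\lambda_2 \times_{\Mnd} x)$ and $(\lambda_1 +_S \lambda_2)\times_{\Mnd} x = (\lambda_1 \times_{\Mnd} x) +_{\Mnd} (\lambda_2 \times_{\Mnd} x)$ follow from associativity of $+_{\Mnd}$; the identity $\lambda \times_{\Mnd}(a +_{\Mnd} b) = (\lambda \times_{\Mnd} a) +_{\Mnd} (\lambda \times_{\Mnd} b)$ uses commutativity to reorder the $2\lambda$ summands; and $1_S \times_{\Mnd} x = x$, $0_S \times_{\Mnd} x = 0_{\Mnd}$, and $\lambda \times_{\Mnd} 0_{\Mnd} = 0_{\Mnd}$ are immediate. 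The main obstacle is the partiality convention: because adding $+\infty$ to the $-\infty$ produced by $\cdot_{\Mnd}$ (equivalently, sums involving $-g$) is undefined, I must argue for each identity that whenever one side is defined the other is too. The observation that makes this routine is that definedness of a sum is determined coordinatewise and is invariant under reassociating and reordering the same multiset of summands; hence both sides of each axiom add the identical multiset of values at every $c \in \mathcal{C}$, are simultaneously defined or undefined, and coincide in value by commutativity and associativity in $\R^k$.
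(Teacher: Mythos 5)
Your proof is correct, and it rests on the same underlying idea as the paper's: the canonical $\mathbb{N}$-action by iterated addition on a commutative (partial) monoid of $\R^k$-valued functions. The difference is one of rigor rather than strategy. The paper disposes of the lemma in a single sentence, asserting that $\Mnd$ ``inherits'' the semimodule structure from the ambient function space; you instead construct the action explicitly, verify each of the six axioms, and --- importantly --- check two things the paper's inheritance argument silently assumes. First, closure: a subset of a semimodule is not automatically a sub-semimodule, so your case analysis over $\Mnd = \Mnd^- \cup \{g\} \cup \{0_{\Mnd}\}$ (using that $\Mnd^-$ is closed under $+_{\Mnd}$ by construction, that $g +_{\Mnd} g = g$, and that $s +_{\Mnd} s = s$) is a genuine addition. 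Second, partiality: since elements such as $-g$ and the $-\infty$ values produced by $\cdot_{\Mnd}$ can make sums undefined, the semimodule axioms require the ``both sides defined and coincide if one side is defined'' clause, and your observation that definedness of a coordinatewise sum depends only on the multiset of summands (not on their order or association) is exactly the point needed to discharge it. In short, your argument buys a complete verification where the paper offers an appeal to inheritance; nothing in your write-up would fail.
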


\begin{proof}
  Since the functions over real numbers are a semimodule over $\mathbb{N}$, $\Mnd$ inherits the required properties to be a commutative $\mathbb{N}$-semimodule.\end{proof}

\begin{lemma}
  $\Mnd$ is a partially ordered commutative monoid.
\end{lemma}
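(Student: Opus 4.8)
The plan is to observe first that the commutative (partial) monoid structure on $\Mnd$ is already in hand: the unit $0_{\Mnd}$, associativity, and commutativity were supplied by the two preceding lemmas (as a partial monoid and as a commutative $\mathbb{N}$-semimodule). Hence the only genuinely new obligations are that $\leq_{\Mnd}$ is a partial order and that it is translation invariant, i.e.\ that $a +_{\Mnd} b \leq_{\Mnd} a +_{\Mnd} c$ whenever $b \leq_{\Mnd} c$. Throughout I would work from the pointwise description of the order: for $f_1,f_2 \in \Mnd^-$, $f_1 \leq_{\Mnd} f_2$ means that for every $c \in \mathcal{C}$ either $f_1(c) = -\infty$ or $f_1(c) = f_2(c)$ (``$f_1$ agrees with $f_2$ wherever $f_1$ is defined''), together with the convention that $g$ dominates every element and $-g$ is dominated by every element.

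For the partial-order axioms I would argue pointwise. Reflexivity is immediate since $f(c)=f(c)$ at every $c$. For transitivity, given $f_1 \leq_{\Mnd} f_2 \leq_{\Mnd} f_3$ and a point $c$ with $f_1(c) \neq -\infty$, the first relation forces $f_1(c)=f_2(c)$, so $f_2(c) \neq -\infty$, and then the second relation forces $f_2(c)=f_3(c)$; chaining gives $f_1(c)=f_3(c)$, which is exactly $f_1 \leq_{\Mnd} f_3$. Antisymmetry is the one spot needing a small observation: if $f_1 \leq_{\Mnd} f_2$ and $f_2 \leq_{\Mnd} f_1$ yet $f_1(c) \neq f_2(c)$ at some $c$, then the first relation forces $f_1(c)=-\infty$ and the second forces $f_2(c)=-\infty$, whence $f_1(c)=f_2(c)=-\infty$, a contradiction; so the functions agree at every $c$ and are equal. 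The extremal elements $g$ and $-g$ are handled separately but identically, since $g(c)=\infty \neq -\infty$ forces equality with anything dominating it.

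For translation invariance I would fix $b \leq_{\Mnd} c$ and verify $a +_{\Mnd} b \leq_{\Mnd} a +_{\Mnd} c$ by a pointwise case split, using that $-\infty$ is absorbing under the standard function addition. At a point $x$: if $a(x)=-\infty$ or $b(x)=-\infty$, then $(a +_{\Mnd} b)(x)=-\infty$ and the order condition holds vacuously at $x$; otherwise $b(x)\neq -\infty$ forces $b(x)=c(x)$ by hypothesis, so $(a +_{\Mnd} b)(x)=a(x)+b(x)=a(x)+c(x)=(a +_{\Mnd} c)(x)$. Thus the required inequality holds at every $x$.

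The main obstacle I anticipate is not any deep argument but careful bookkeeping with the two adjoined extremal values: I must pin down precisely that $-\infty$ absorbs under $+_{\Mnd}$ (which is exactly what makes translation invariance go through) and treat $g=\infty$ consistently with the \emph{partiality} of $+_{\Mnd}$, since sums such as $g +_{\Mnd} f$ with $f$ taking the value $-\infty$ are precisely the undefined instances of the partial operation. I would therefore state the translation-invariance claim only for arguments on which both sides are defined, matching the ``both sides are defined and coincide if one side is defined'' reading built into the definition of a partially ordered commutative (partial) monoid.
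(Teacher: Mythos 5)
Your proposal is correct and follows essentially the same route as the paper: the paper likewise dismisses reflexivity, transitivity, and antisymmetry as immediate and then checks translation invariance by the identical pointwise case split (either $f_1(c)=f_2(c)$, giving equality of the sums at $c$, or $f_1(c)=-\infty$, which absorbs under addition). Your version merely fills in the order-axiom details and the handling of the extremal elements $g$ and $-g$ that the paper leaves implicit.
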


\begin{proof}
  Clearly $\leq_{\Mnd}$ is reflexive, transitive, and antisymmetric.
  Furthermore, suppose $f_1\leq_{\Mnd} f_2$.
  Then for all $c\in \mathcal{C}$, $f_1(c)=f_2(c)$ or $f_1(c)=-\infty$.
  Then $(f_1+_{\Mnd} f_3)(c)=f_1(c)+f_3(c)$ whenever $f_1(c)=f_2(c)$, and $(f_1+f_3)(c)=-\infty+f_3(c)=-\infty$ otherwise.
  Thus $\Mnd$ is a partially ordered commutative monoid.
\end{proof}

\begin{lemma}
  $\Mnd$ is a lattice ordered commutative monoid.
\end{lemma}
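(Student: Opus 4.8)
The plan is to lean on the five preceding lemmas, which have already done most of the work. Together they establish that the underlying poset of $\Mnd$ is a complete lattice, that $\Mnd$ is a partial monoid, a commutative $\mathbb{N}$-semimodule, and a partially ordered commutative monoid, and — crucially — that the greatest lower bound is computed by $f_1 \smallvee f_2 = f_1 \cdot_{\Mnd} f_2$. Comparing this against the definition of a lattice ordered commutative monoid, every structural requirement is in hand except the one distributive identity
\[ (a +_{\Mnd} b)\smallvee(a +_{\Mnd} c) = a +_{\Mnd}(b \smallvee c). \]
So the proof reduces to verifying this single equation, and I would state at the outset that this is all that remains.

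First I would rewrite the goal using $\smallvee = \cdot_{\Mnd}$, turning it into $(a +_{\Mnd} b)\cdot_{\Mnd}(a +_{\Mnd} c) = a +_{\Mnd}(b \cdot_{\Mnd} c)$, and then verify this pointwise on $\mathcal{C}$. Since $+_{\Mnd}$ is ordinary (total) function addition in this construction, no definedness issues arise. Fixing a point $q \in \mathcal{C}$, on the right $(b \cdot_{\Mnd} c)(q)$ equals $b(q)$ when $b(q)=c(q)$ and $-\infty$ otherwise, so adding $a(q)$ yields $a(q)+b(q)$ in the agreement case and $-\infty$ in the disagreement case. On the left, $(a +_{\Mnd} b)(q)=a(q)+b(q)$ and $(a +_{\Mnd} c)(q)=a(q)+c(q)$, and the $\cdot_{\Mnd}$ operation retains this common value precisely when $a(q)+b(q)=a(q)+c(q)$. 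The heart of the matter is that, for finite $a(q)$, this last equality is equivalent to $b(q)=c(q)$ by cancellation, so the two sides agree at $q$.

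The remaining work — and the main obstacle — is the careful bookkeeping of the exceptional values, since the cancellation step is valid only where $a(q)$ is finite. I would handle separately the cases where $a(q)$, $b(q)$, or $c(q)$ equals $-\infty$, checking that translating by $-\infty$ collapses both sides to $-\infty$ consistently (for instance, if $a(q)=-\infty$ then both sides are $-\infty$ regardless of whether $b(q)=c(q)$), together with the cases involving the top element $g\equiv\infty$ and the identity $0_{\Mnd}$. I would also confirm closure, namely that each composite still lies in $\Mnd = \Mnd^- \cup \{g\} \cup \{0_{\Mnd}\}$, which follows because $\Mnd^-$ is by construction closed under $+_{\Mnd}$ and $\cdot_{\Mnd}$. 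Once the identity is verified at every $q \in \mathcal{C}$, the distributive law holds, and combined with the earlier lemmas this completes the proof that $\Mnd$ is a lattice ordered commutative monoid.
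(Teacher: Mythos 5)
Your proposal is correct and follows essentially the same route as the paper: reduce to the single distributive identity $(a+_{\Mnd}b)\smallvee(a+_{\Mnd}c)=a+_{\Mnd}(b\smallvee c)$, rewrite $\smallvee$ as $\cdot_{\Mnd}$, and verify it pointwise on $\mathcal{C}$ by splitting into the case $a(q)=-\infty$ (both sides collapse to $-\infty$) and the case where cancellation gives $b(q)=c(q)$. Your extra attention to closure and to the exceptional elements $g$ and $0_{\Mnd}$ is slightly more careful than the paper's write-up but does not change the argument.
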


\begin{proof}
  Since $(f_1+_{\Mnd} f_2)\smallvee(f_1+_{\Mnd} f_3)=(f_1+_{\Mnd} f_2)(c)$ when $(f_1+_{\Mnd} f_2)(c)=(f_1+_{\Mnd} f_3)(c)$ and $-\infty$ otherwise,
  and since $(f_1+_{\Mnd} f_2)(c)=(f_1+_{\Mnd} f_3)(c)$ if and only if $f_2(c)=f_3(c)$ or $f_1(c)=-\infty$,
  then for $c\in \mathcal{C}$ either $f_1(c)=-\infty$, and hence $(f_1+(f_2\times_{\Mnd} f_3))(c)=-\infty= ((f_1+_{\Mnd} f_2)\times_{\Mnd} (f_1+_{\Mnd} f_3))(c)$;
  or $f_2(c)=f_3(c)$, and thus $((f_1+_{\Mnd} f_2)\times_{\Mnd} (f_1+_{\Mnd} f_3))(c)=f_1(c)+(f_2\times_{\Mnd} f_3)(c)$.
  Thus $\Mnd$ is a lattice ordered commutative monoid.
\end{proof}

Note that $\mathbb{N}$ is flat over itself and direct sums of flat modules are flat \cite{nlab:flat_module}.
Hence we can build flat $\mathbb{N}$ semimodules representing the flows over any collection of edges, and subsemimodules of any such construction remain flat \cite{Kr20??final}.

Thus Corollary 7.12 from Krishnan \cite{Kr20??final} gives us that, provided the semimodule for which $\Mnd$ is a subsemimodule is flat over $\mathbb{N}$ (such as, for instance, any set of capacities modeling rational flows), $\sup_{\text{flow } \phi} \phi(e) =\inf_{\text{e-cut }C}\Sigma_{c\in C}\omega_c$.
In other words, the feasible region of flows is given by the infimum over e-cuts of the sum of the projection functions associated with that cut.
The region of feasible flow values must then be the image of $\mathcal{C}$ under that function, while all possible flows achieving that value is determined by the level sets of the function.

\section{Ratio Maximum Flow} \label{sec:ratiomaxflow}
While Theorem \ref{thm:MCMFMC} presents a tight duality result for a general multicommodity max flow problem, the resulting algorithms presented in Section \ref{ssec:MCAlgo} could still be computationally expensive.
Hence we consider problems closely related to the \MCMF problem for which we could develop computationally tractable approaches.
Motivated by the work of Leighton and Rao \cite{LeRa1999}, we study \RMF and \IRMF as multicommodity maximum flow problems that include an explicit optimization component rather than searching for the full feasible region.
Our approach is motivated by augmenting path algorithms for single commodity max flow problem \cite{AhMaOr1993}.
However, we search for an augmenting \emph{cycle set} instead of augmenting paths.

\begin{definition}[Augmenting Cycle Set]
  An \emph{augmenting cycle set} in a network with flow $f$ having coefficients in $\mathbb{R}^k$ is a set of cycles such that if we modify the flow of every arc in a cycle by that cycle's coefficients,
  we obtain a new flow with a higher value than $f$.
\end{definition}

\begin{lemma} \label{lem:flocycle}
  Any two flows with the same flow value in a network differ by an element of the cycle space.
\end{lemma}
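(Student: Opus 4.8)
The plan is to set $g = f_1 - f_2$ and show that $g$ is a \emph{circulation}, i.e., an arc-assignment satisfying flow conservation at \emph{every} node, and then to invoke the standard identification of circulations with the cycle space of the underlying digraph.

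First I would reduce to a single commodity. As in the proof of Lemma~\ref{flow_equality}, decompose each flow into its $k$ scalar components, one per commodity. The cycle space of the $k$-commodity network is the direct sum of $k$ copies of the scalar cycle space of the underlying digraph---a vector-valued arc-assignment is a circulation exactly when each of its $k$ components is---so it suffices to establish the claim for each commodity separately.

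Next, fixing a commodity, I would argue that $g$ satisfies conservation at every node. Conservation is a linear constraint and both $f_1$ and $f_2$ satisfy it at every internal node, hence so does their difference $g$. The only nodes at which a flow need not conserve in the non-enhanced picture are $s$ and $t$; but working in the enhanced network $N^E$, conservation holds at all nodes, including $s$ and $t$, precisely because of the return arc $e$, and the value assigned to $e$ is the flow value (by the definition of flow value on a network). Since $f_1$ and $f_2$ share the same flow value, $g(e) = f_1(e) - f_2(e) = 0$, and $g$ satisfies conservation at every node of $N^E$. Equivalently, the net flow of $g$ out of $s$ (and into $t$) vanishes, so $g$ is a genuine circulation on the original network as well.

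Finally I would invoke the standard linear-algebra fact that the set of real arc-assignments satisfying conservation at every node is exactly the kernel of the incidence matrix of the digraph, and that this kernel is spanned by the indicator vectors of directed cycles, i.e., it is the cycle space. Thus $g$ lies in the cycle space, giving $f_1 = f_2 + g$ as required. The main point to handle with care is the treatment of $s$ and $t$: the equality of flow values is exactly what upgrades ``$g$ conserves at internal nodes'' to ``$g$ conserves everywhere,'' which is what makes $g$ a circulation rather than merely a pseudoflow. I expect no deeper obstacle; the remaining content is the routine identification of the conservation kernel over $\R^k$ with the cycle space.
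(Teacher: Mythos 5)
Your proposal is correct and follows essentially the same route as the paper: both arguments observe that flows on the enhanced network are circulations, that the difference $f_1-f_2$ therefore lies in the cycle space of $N^E$, and that equality of flow values forces the value on the return arc $e$ to vanish, placing the difference in the cycle space of the base network. You simply make explicit two steps the paper leaves implicit (the reduction to single commodities and the identification of the conservation kernel with the span of cycles), which is a reasonable elaboration rather than a different argument.
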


\begin{proof}
  We first observe that this result is elementary in the case of single commodity flows.
  we show that it also holds for our general multicommodity setting.
  Let flows $f_1$ and $f_2$ in the enhanced network have the same flow value.
  Since both flows are cycles in the enhanced network, $f_1-f_2$ must also be in the cycle space.
  Furthermore, since both flows have the same flow value, and hence the same value on the distinguished edge $e$, $f_1-f_2$ has a value $0$ on edge $e$, and thus lies in the cycle space of the base network.
  Thus it follows that any two flows with the same flow value differ by a sum of cycles.
\end{proof}

We first address the following decision problem:
given a flow value $f$, is there a flow on the network $N$ with value $f$?
We solve this problem by constructing a pseudoflow $F'$ with value $f$ and search for a set of augmenting cycles that transform $F'$ into a flow.
To this end, we transform the problem into one of finding an area of cycle space within particular bounds.
We show that the decision problem returns true if and only if the bounds in cycle space are non-empty, and that augmenting by a set of cycles in that region gives a flow.

Since we know that the flows on a network are cycles in the enhanced network, and since we know two potential flows share a flow value if and only if they differ by a cycle in the network (Lemma \ref{lem:flocycle}), we can approach the problem of finding a valid realization of a particular flow value by considering it in terms of the cycle space of the network.
Since all arcs in our network are intrinsically bi-directional, the cycle space of our network is in fact the cycle space of the underlying graph.
A generating set for a graph's cycle space can be found in polynomial time by first finding a spanning tree of the graph rooted at $s$ \cite{AhMaOr1993}.
Each edge not on the spanning tree determines a unique cycle with the tree:
take the unique path to the head and tail of the edge and include in our cycle each edge unique to one of the paths, then add the edge in question not on the tree.
There are then $n-1$ edges on the tree and $m-n+1$ edges determining cycles. 

Next we establish constraints on what combinations of cycles will give a flow.
We start by partitioning the edge set by the cycles to which the edge belongs.
The $m-n+1$ edges determining unique cycles belong to exactly one cycle.
The remaining $n-1$ edges may belong to one or more cycles, and we can find this information directly from the cycle-membership matrix.
Finally, we determine the capacity constraints for each set by intersecting the orientation corrected capacity of each arc involved in that set.

\begin{definition}[Cycle Constraints] \label{def:cycons}
  For a network with a set of generating cycles $\{C_i\}_{i\in I}$, the \emph{cycle constraints} $\{c_i\}_{i\in I}$ are specified on $k$-dimensional vectors.
  For each arc $a$ in the network, we have a constraint $R_a$ that the sum of all $c_i$ that use $a$ as a forward arc minus the sum of all $c_i$ that use $a$ as a backward arc must lie within the capacity of $a$ minus the flow assigned to $a$.
\end{definition}

We illustrate cycle constraints on an example network in Figure \ref{fig:cycle_search}.
Suppose we have cycles $c_1=\{a_1,a_3,-a_4,-a_2\}$, $c_2=\{a_5,a_8,-a_3\}$, and $c_3=\{a_5,a_7,-a_6,-a_3\}$ in the network.

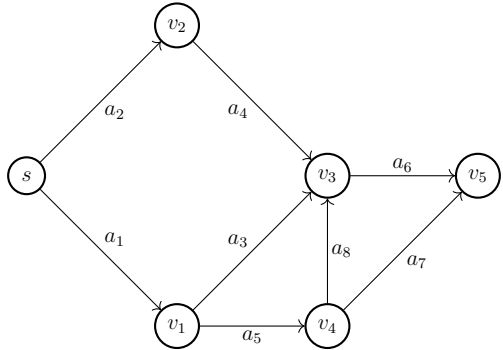
\begin{figure}[ht!]
    \centering
\begin{tikzpicture}[
roundnode/.style={circle, draw=green!30, fill=green!5, very thick, minimum size=7mm, scale=.7},
roundnode2/.style={circle, draw=black!100, fill=green!0, thick, minimum size=7mm, scale=.7},
squarednode/.style={rectangle, draw=red!60, fill=red!5, very thick, minimum size=5mm},
 el/.style = {inner sep=2pt, align=left, scale=.7},
]
\node[roundnode2] (s) at (0,0) { $s$};
\node[roundnode2] (v1) at (2,-2) { $v_1$};
\node[roundnode2] (v2) at (2,2) { $v_2$};
\node[roundnode2] (v3) at (4,0) { $v_3$};
\node[roundnode2] (v4) at (4,-2) { $v_4$};
\node[roundnode2] (v5) at (6,0) { $v_5$};

\path
 (s) edge [->] node[el,above right] { $a_1$} (v1)
 (s) edge [->] node[el, below right ] { $a_2$} (v2)
 (v1) edge [->] node[el, above left] { $a_3$} (v3)
 (v2) edge [->] node[el, below left] { $a_4$} (v3)
 (v3) edge [->] node[el, above] { $a_6$} (v5)
 (v4) edge [->] node[el, below right] { $a_7$} (v5)
 (v4) edge [->] node[el, right] { $a_8$} (v3)
 (v1) edge [->] node[el, below] { $a_5$} (v4);
 
 \end{tikzpicture}
    \caption{Example network to illustrate cycle constraints in Definition \ref{def:cycons}.
    }
    \label{fig:cycle_search}
 \end{figure}
 
 Then to find the cycle constraints, we observe $a_1, a_2, a_4$ are unique to $c_1$, $a_3$ is common to all three, $a_5$ is unique to $c_2$ and $c_3$, $a_6, a_7$  are unique to $c_3$, and $a_8$ is unique to $c_2$.
 Then $c_1$ has a capacity of $v(a_1) \cap -v(a_4) \cap -v(a_2)$, $c_2$ has a capacity of $v(a_8)$, and $c_3$ has a capacity of $-v(a_6) \cap v(a_7)$.
 Additionally, $c_2+c_3$ has capacity $v(a_5)$, and $c_1-c_2-c_3$ has capacity $v(a_3)$.
 
Then any feasible solution to $c_1,c_2,c_3$ such that:
\begin{itemize}[itemsep=-0.02in]

  \item $c_1 \in$ $v(a_1) \cap -v(a_4) \cap -v(a_2)$

  \item $c_2 \in$ $v(a_8)$

  \item $c_3 \in$ $-v(a_6) \cap v(a_7)$

  \item $c_2+c_3 \in$ $v(a_5)$

  \item $c_1-c_2-c_3 \in$ $v(a_3)$
\end{itemize}
is a set of cycle augmentations that give a feasible flow of the current infeasible flow value.

In general, each arc contributes a constraint that the sum of all generating cycles through that arc in the positive orientation minus the sum of generating cycles through the arc in the negative orientation must lie within the adjusted capacity of that arc.
Hence we get an $(m-n+1)k$-dimensional optimization problem with $m$ constraints.

\begin{theorem}
  Augmenting along a cycle set results in a flow if and only if the cycle set abides by the cycle constraints of the network.
\end{theorem}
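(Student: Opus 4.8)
The plan is to unpack the theorem into two routine verifications, after first establishing that cycle augmentation never disturbs conservation or the flow value. Write $F'$ for the pseudoflow of value $f$ that we start from, fix the generating cycles $\{C_i\}_{i\in I}$ obtained from the spanning tree, and for a given cycle set $\{c_i\}_{i\in I}\subseteq\R^k$ define the augmented object $\phi$ arcwise by
\[
  \phi(a) \;=\; F'(a) \;+\; \sum_{i:\,a\in C_i^{+}} c_i \;-\; \sum_{i:\,a\in C_i^{-}} c_i,
\]
where $C_i^{+}$ and $C_i^{-}$ collect the arcs on which $C_i$ runs forward and backward, respectively. I would then argue the biconditional by reading $\phi$ directly against Definition \ref{def:cycons}.

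First I would show that $\phi$ is automatically a pseudoflow of value $f$, regardless of the choice of $\{c_i\}$. Each generating cycle $C_i$ lies in the cycle space and hence satisfies conservation at every node; $F'$ satisfies conservation by hypothesis; and since conservation is an additive condition, the linear combination $\phi$ satisfies it as well. Moreover, because the $C_i$ generate the cycle space of the \emph{base} network they carry value $0$ on the distinguished edge $e$ (this is the mechanism behind Lemma \ref{lem:flocycle}), so $\phi(e)=F'(e)=f$ and the value is preserved. Consequently ``augmenting yields a flow'' reduces to the single remaining requirement that $\phi$ respect every capacity constraint.

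It then remains to observe that the capacity requirement at an arc is literally the cycle constraint $R_a$. Rearranging $\phi(a)\in C_a$ gives
\[
  \sum_{i:\,a\in C_i^{+}} c_i \;-\; \sum_{i:\,a\in C_i^{-}} c_i \;\in\; C_a - F'(a),
\]
which is exactly the condition of Definition \ref{def:cycons}: the net signed cycle contribution at $a$ must land in the capacity of $a$ minus the flow currently assigned to $a$. Hence $\phi$ is capacity-feasible at $a$ if and only if $\{c_i\}$ satisfies $R_a$, and ranging over all arcs shows $\phi$ is a flow if and only if the cycle set abides by all cycle constraints, establishing both directions at once. To connect this with the decision procedure, I would further note that any feasible flow of value $f$ differs from $F'$ by an element of the base cycle space (by the pseudoflow version of Lemma \ref{lem:flocycle}), so every feasible realization arises as such a $\phi$ with uniquely determined coefficients $c_i$.

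The main obstacle is really only bookkeeping: the substance lies in the first step, confirming that augmentation keeps us inside the class of pseudoflows of value $f$, so that feasibility collapses onto the capacity condition alone. This hinges on the generating cycles belonging to the base (value-$0$-on-$e$) cycle space, which the spanning-tree construction together with Lemma \ref{lem:flocycle} guarantees; once that is in hand the equivalence with Definition \ref{def:cycons} is immediate, and the orientation signs distinguishing $C_i^{+}$ from $C_i^{-}$ are the only detail that must be tracked carefully.
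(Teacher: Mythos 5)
Your proposal is correct and follows essentially the same route as the paper's proof: augmentation along cycles preserves conservation and the flow value (so feasibility reduces to the capacity condition), and the capacity condition at each arc is, by construction, exactly the cycle constraint $R_a$. Your version is simply more explicit in writing out the augmented assignment $\phi(a)$ and the rearrangement $\phi(a)\in C_a \Leftrightarrow R_a$, where the paper phrases the same per-arc argument in terms of its partition of arcs by cycle membership.
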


\begin{proof}
  Suppose we have a cycle set in the solution set to the problem.
  Then the constraints imposed by arcs unique to a single cycle $C$ restrict the possible values of $C$ to ones that result in no arc unique to that cycle overflowing when augmented along.
  Likewise, the constraints imposed by the arcs common to two cycles restrict the values those cycles can take to pairs of values which will result in none of the common arcs overflowing after augmentation, and so on.
  Thus since all arcs belong to one set in the partition, and since all constraints are satisfied, augmenting along the cycle set must result in no arcs overflowing.
  Since we started with a pseudoflow, which respects flow conservation, and augmented along cycles (which preserve flow conservation) we get a flow.
  Since augmenting along cycles does not change the value of the flow, we maintain a flow value of $f$.

  Suppose a set of cycles gives us a flow after augmentation.
  Then for each arc, the cycle set must fall within the constraints imposed by that arc.
  Thus the cycle set must fall in the solution set to the above problem.
  Again, augmenting along cycles guarantees that we maintain the same flow value.
\end{proof}

We use the solution to the decision problem in algorithms that find an $\epsilon$-approximate solution to \RMF (Algorithm \ref{alg:CSearch}) and an $\epsilon$-approximate solution to \IRMF using a binary search (Algorithm \ref{alg:IntCSearch}).

\begin{algorithm}[ht!]
 \KwData{Network $N$ with compact (i.e., closed and bounded)
 reducible capacities $\{C_i\}$;\\  \hspace*{0.4in} 
 Desired Ratio: $R$, the ratio of commodities we wish to transport;\\  \hspace*{0.4in} 
 Upper bound: $B^+$, an overestimate on the multiples of $R$ that can be sent;\\  \hspace*{0.4in} 
 Lower bound: $B^-=0$; and\\  \hspace*{0.4in} 
 Error allowed: $\epsilon>0$ (returned multiple is within  $\epsilon$ of true max flow multiple).}
 \KwResult{$P^*$, an approximation of the true max flow multiple $P$ such that $P^*R \leq PR \leq (P^*+\epsilon)R$}

 \SetKwFunction{FTest}{Test}
 
 \vspace*{0.1in}
 Initialization:
 
 $D=B^+-B^-$\;
 
 \While{$D>\epsilon$}{
 $T=\frac{B^++B^-}{2}$\;
 Test$(TR,B^-,B^+)$\;
 $D=B^+-B^-$\;
 }
 \KwRet{$B^-$}

 \vspace*{0.1in}
 \SetKwProg{Fn}{Function}{:}{}
 \Fn{\FTest{$V$, $B^-$, $B^+$}}{
   $T=\frac{B^++B^-}{2}$\;
    Test decision problem on $V$\;
   \eIf{decision problem returns TRUE}{
    $B^+=T$}
     {
     $B^-=T$  
     }}

 \vspace{0.5\baselineskip}
 \caption{$\epsilon$-error approximation for \RMFs.}
 \label{alg:CSearch}
\end{algorithm}

\begin{algorithm}[ht!]
 \KwData{Same as in Algorithm \ref{alg:CSearch}, except we require $0<\epsilon<0.5$ \\ \hspace*{0.4in}
 (so that the returned value can be within $\epsilon$ of at most one integer).}
 \KwResult{$P^*$, an approximation of the true max flow multiple $P$ such that $P^*R \leq PR \leq (P^*+\epsilon)R$}
 
 \SetKwFunction{FTest}{Test}
 \SetKwFunction{FEndTest}{EndTest}

 Initialization:
 
 $B^-=0$\;
 
 $D=B^+-B^-$\;
 
 \While{$D>\epsilon$}{
 $T=\frac{B^++B^-}{2}$\;
 Test$(TR,B^-,B^+)$\;
 $D=B^+-B^-$\;
 }
 
 EndTest$(\ceil{B^-}R,B^-)$

 \KwRet{$B^-$}

 \vspace*{0.1in}
 \SetKwProg{Fn}{Function}{:}{}
 \Fn{\FEndTest{$V$, $B^-$}}{
   Test decision problem on $V$\;
   \eIf{decision problem returns TRUE}{
    $B^-=\ceil{B^-}$}
     {
     $B^-=\floor{B^-}$
     }}

 \vspace{0.5\baselineskip}
 \caption{$\epsilon$-error approximation for \IRMFs.
 The function \texttt{Test} is defined in Algorithm \ref{alg:CSearch}.}
 \label{alg:IntCSearch}
\end{algorithm}

We show that these algorithms terminate and that they are accurate.

\begin{lemma} \label{lem:AlgosTerm}
  Algorithms \ref{alg:CSearch} and \ref{alg:IntCSearch} terminate.
\end{lemma}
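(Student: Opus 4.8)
The plan is to show that each algorithm is a binary search that halves a working interval on every pass, together with at most one extra decision-problem evaluation, and that both pieces run for only finitely long.

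First I would track the width $D = B^+ - B^-$ across iterations of the \texttt{while} loop. On a single pass the midpoint $T = (B^+ + B^-)/2$ is formed and the \texttt{Test} subroutine replaces exactly one of the endpoints $B^+$ or $B^-$ by $T$, according to whether the decision problem returns \texttt{TRUE} or \texttt{FALSE}; in either case the new width is $D/2$. Hence after $j$ passes the width is $D = B^+_{\mathrm{init}}/2^j$, using the initialization $B^-_{\mathrm{init}} = 0$. I would then feed this geometric decay into the loop guard $D > \epsilon$, which first fails once $2^j > B^+_{\mathrm{init}}/\epsilon$, i.e. after at most $\lceil \log_2(B^+_{\mathrm{init}}/\epsilon) \rceil$ passes. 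Since the overestimate $B^+$ is finite and $\epsilon > 0$ by hypothesis, this is a finite iteration count, and the identical bound governs the loop in Algorithm \ref{alg:IntCSearch}.

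Next I would argue that each pass performs only finitely much work. Every pass makes a single call to the decision problem of Section \ref{sec:ratiomaxflow}, which, for the compact (closed and bounded) reducible capacities assumed in the input, reduces to testing feasibility of a bounded region of the $(m-n+1)k$-dimensional cycle space subject to the $m$ cycle constraints of Definition \ref{def:cycons}. Over compact capacities this is a finite feasibility check, so each call returns. Finitely many passes, each of which halts, yields termination of the loop for Algorithm \ref{alg:CSearch}. For Algorithm \ref{alg:IntCSearch} I would finally note that the post-loop \texttt{EndTest} invokes the decision problem exactly once more and then returns, so the integer variant terminates as well.

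The halving estimate is routine; the step that genuinely carries the hypotheses is the claim that each decision-problem call halts. The main obstacle is therefore not the outer loop but justifying that the underlying cycle-space feasibility test terminates, and this is precisely where compactness (boundedness together with closedness) of the capacities is indispensable: without it, the feasibility search over $\prod_i C_i$ need not conclude in finite time, and the geometric bound on the number of passes would be of no help.
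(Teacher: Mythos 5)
Your argument matches the paper's proof: both rest on the observation that boundedness of the capacities gives a finite initial $B^+$ and that the binary search halves $B^+-B^-$ on every pass, so the guard $D>\epsilon$ fails after at most $\lceil\log_2(B^+/\epsilon)\rceil$ iterations. The only addition on your side is the explicit check that each individual decision-problem call (and the single post-loop \texttt{EndTest} in Algorithm \ref{alg:IntCSearch}) halts, a point the paper's proof takes for granted.
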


\begin{proof}
  Because the capacities of the network are bounded, we get a finite initial value for $B^+$.
  As a binary search, the algorithm reduces the difference of $B^+$ and $B^-$ by a factor of $\frac{1}{2}$ at each step.
  Thus $B^+-B^-<\epsilon$ when $\frac{B^+}{2^s}<\epsilon$ after $s$ steps, which happens in $s = \ceil{\log(\frac{B^+}{\epsilon})}$ steps.
\end{proof}

\begin{theorem} \label{thm:epsapprox}
  Algorithm \ref{alg:CSearch} gives a feasible max flow in the desired ratio which is within $\epsilon$ of the true max flow.
\end{theorem}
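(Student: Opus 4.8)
The plan is to prove the two guarantees of the returned value $P^* = B^-$ separately: \emph{accuracy}, namely $P^*R \le PR \le (P^*+\epsilon)R$ with $P^*R$ a genuinely feasible flow value, and \emph{termination}, namely that the \texttt{while} loop exits after finitely many passes. Both rest on reading Algorithm~\ref{alg:CSearch} as a binary search on the scalar multiple $t$, in which each call to \texttt{Test} invokes the decision procedure of the preceding theorem (construct a pseudoflow of value $tR$ and augment it along a cycle set, which succeeds exactly when the cycle constraints are satisfiable) to decide whether $tR$ is a feasible flow value. I would import the correctness of that decision oracle, together with Lemma~\ref{lem:flocycle}, as a black box and reduce the rest to bookkeeping about the interval $[B^-,B^+]$.

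The first and most important step is a monotonicity lemma: the set $\{t \ge 0 : tR \text{ is a feasible flow value}\}$ is a single down-closed interval $[0,P]$, and the supremum $P$ is attained. Down-closedness is where reducibility enters. Given a flow $\phi$ realizing value $tR$ and any $0 \le s \le t$, I would pass to $\phi' = (s/t)\phi$: flow conservation is linear and hence preserved, the value of $\phi'$ is $sR$, and on each arc $\phi'(a) = (s/t)\phi(a)$ lies in $C_a$ because $0 \le (s/t)\phi(a) \le \phi(a)$ elementwise and $C_a$ is reducible. Attainment of $P$ follows from compactness: since each $C_a$ is closed and bounded, the set of feasible flows is compact, its set of flow values is closed, and the largest multiple of $R$ it contains is achieved. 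Consequently the oracle is monotone, returning \texttt{TRUE} precisely when $t \le P$, which is exactly the property that validates binary search.

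With monotonicity in hand I would establish the loop invariant $B^- \le P \le B^+$ together with the side condition that $B^-R$ is feasible. The base case holds at initialization: $B^- = 0$ is feasible (the zero flow) and $B^+$ is an overestimate, so $P \le B^+$. For the inductive step, each pass sets $T$ to the midpoint $(B^++B^-)/2 \in [B^-,B^+]$ and queries the oracle at $TR$; monotonicity tells us whether $T \le P$, and updating the corresponding endpoint to $T$ preserves $B^- \le P \le B^+$, keeps $B^-R$ feasible, and replaces the gap $D = B^+ - B^-$ by $D/2$. Termination is then immediate, since $D$ halves every iteration, so after $\lceil \log_2((B^+_0 - B^-_0)/\epsilon)\rceil$ passes we have $D \le \epsilon$ and the loop exits. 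Accuracy follows from the invariant at exit: $B^- \le P \le B^+$ and $B^+ - B^- \le \epsilon$ give $P^* = B^- \le P \le B^- + \epsilon = P^* + \epsilon$, and $P^*R$ is feasible by the side condition, which is exactly the claimed guarantee.

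The main obstacle is the monotonicity lemma, not the binary-search arithmetic: everything downstream is routine once feasibility of $tR$ is known to be monotone in $t$ with the optimum attained. The real work is converting the per-arc, per-coordinate reducibility hypothesis into this global statement, where one must be careful that the scaling argument uses reducibility in precisely the elementwise sense of its definition, and confirming that compactness supplies the attainment of $P$ so that the phrase ``a feasible max flow'' is meaningful. I would also note explicitly that the oracle's correctness, inherited from the cycle-constraint theorem and Lemma~\ref{lem:flocycle}, is what licenses treating each \texttt{Test} call as an exact feasibility query rather than an approximate one.
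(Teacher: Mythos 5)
Your proposal is correct and takes essentially the same route as the paper's proof: the zero flow seeds $B^-$, only feasible multiples ever replace $B^-$, and your key monotonicity step---scaling a realization of $P'R$ down by the factor $B^+/P'$ and invoking elementwise reducibility to stay within each $C_a$---is exactly the paper's contradiction argument, merely repackaged as a down-closedness lemma plus a loop invariant $B^- \le P \le B^+$. The only material you add beyond the paper's proof is the compactness argument for attainment of $P$ and the iteration count for termination, the latter of which the paper defers to the separate Lemma \ref{lem:AlgosTerm}.
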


\begin{proof}
  Since all capacities are reducible, the $\mathbf{0}$ flow is feasible.
  Since $B^-$ initializes at $0$ and is replaced only with feasible multiples, $B^-$ is guaranteed to be feasible at all steps.

  Now suppose there exists $P'\in \mathbb{R}^+$ such that $P'R$ has a realization, but $P'-B^->\epsilon$.
  We have $B^+-B^-<\epsilon$ by the termination condition, so $P'>B^+$.

  Since $P'$ has a realization, multiplying the values at each arc by $\frac{B^+}{P'}$ gives a realization for $B^+$, which is feasible because each arc has a reducible capacity.
  Thus $B^+$ has a realization.
  But the algorithm only assigns values for which no realization exists to $B^+$, so this is a contradiction.

  Hence the max flow is within $\epsilon$ of $B^-$.
\end{proof}

\begin{lemma} \label{lem:alg:IntCSearch}
  Algorithm \ref{alg:IntCSearch} gives a max flow in the desired ratio which the greatest integer multiple of the desired ratio which can pass through the network.
\end{lemma}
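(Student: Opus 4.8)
The plan is to reuse the correctness argument of Theorem~\ref{thm:epsapprox} for the binary-search loop and then add a short case analysis for the final \texttt{EndTest} step. First I would record the loop invariant maintained throughout the \textbf{while} loop, exactly as in the proof of Theorem~\ref{thm:epsapprox}: $B^-$ is always a feasible multiple (it starts at $0$, which is feasible by reducibility since the $\mathbf{0}$ flow is always realizable) and $B^+$ is always infeasible. Because every capacity is compact and reducible, the set of feasible multiples of $R$ is the closed interval $[0,P]$, where $P$ is the true real-valued maximum multiple. The invariant therefore gives $B^- \le P < B^+$ at every step, and upon loop termination we additionally have $B^+ - B^- \le \epsilon < \tfrac12$.

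Next I would identify the target quantity. Since the feasible multiples form $[0,P]$, the greatest \emph{integer} multiple the network can carry is exactly $\lfloor P \rfloor$, so the goal is to show the returned value equals $\lfloor P \rfloor$ and that a realizing flow exists. The realization is immediate: $\lfloor P \rfloor \le P$, so $\lfloor P \rfloor R$ is feasible by reducibility, and the augmenting-cycle decision procedure established above produces an explicit flow attaining it.

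Then I would analyze \texttt{EndTest}, which tests the integer $N := \lceil B^- \rceil$; the short-interval bound $B^- \le P < B^+ < B^- + \tfrac12$ is precisely what makes a single decision query decisive. If $NR$ is feasible, then $N \le P$, and combined with $P < B^- + \tfrac12 \le N + \tfrac12 < N+1$ this forces $N \le P < N+1$, so $\lfloor P \rfloor = N$ and \texttt{EndTest} sets $B^- = N = \lfloor P \rfloor$. If instead $NR$ is infeasible, then $N > P$; here $B^-$ cannot be an integer (else $N = B^- \le P$, contradicting $N > P$), so $N = \lfloor B^- \rfloor + 1$, and $P < N$ yields $\lfloor P \rfloor \le \lfloor B^- \rfloor$, while $\lfloor B^- \rfloor \le B^- \le P$ gives the reverse inequality; hence $\lfloor P \rfloor = \lfloor B^- \rfloor$ and \texttt{EndTest} sets $B^- = \lfloor B^- \rfloor = \lfloor P \rfloor$. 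In both branches the returned $B^-$ is exactly $\lfloor P \rfloor$.

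Finally I would dispatch termination: each iteration of the \textbf{while} loop replaces either $B^+$ or $B^-$ by the midpoint and hence halves $B^+ - B^-$, so starting from the finite overestimate $B^+$ the loop exits after at most $\lceil \log_2(B^+/\epsilon) \rceil$ iterations, followed by the single \texttt{EndTest} call. The step I expect to be the main obstacle is the \texttt{EndTest} case analysis, in particular justifying that querying the single integer $\lceil B^- \rceil$ always disambiguates $\lfloor P \rfloor$. This is exactly where the hypothesis $\epsilon < \tfrac12$ is essential: it guarantees the bracketing interval $[B^-,B^+]$ is too short to contain two integers, so one decision query settles on which side of the nearest integer $P$ lies.
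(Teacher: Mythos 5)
Your proposal is correct and follows essentially the same route as the paper's proof: invoke Theorem~\ref{thm:epsapprox} to get $B^-$ within $\epsilon$ of the true maximum multiple $P$, use $\epsilon<\tfrac12$ to conclude that only one integer can lie in the bracketing interval, and observe that the single \texttt{EndTest} query of $\lceil B^-\rceil$ then determines $\lfloor P\rfloor$. Your write-up merely makes explicit the case analysis and the interval structure $[0,P]$ of feasible multiples that the paper's shorter argument leaves implicit.
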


\begin{proof}
  By Theorem \ref{thm:epsapprox}, $B^-$ is within $\epsilon$ of the true max flow.
  Since $\epsilon < 0.5$ here, $B^-$ can be within $\epsilon$ of at most a single integer.
  Since $B^-$ has a realization, the maximum integer value with a realization must be either $\ceil{B^-}$ or $\floor{B^-}$, and can be $\ceil{B^-}$ only if $\ceil{B^-}-B^-<\epsilon$.
  Since Algorithm \ref{alg:IntCSearch} checks $\ceil{B^-}$ in such a case, it must return the maximum integer value with a feasible realization.
\end{proof}

\section{Discussion} \label{Discussion}
We addressed two problems using topological techniques and with graphs to gain insight into data and complex systems.
First, we discussed our work on the \MCMF problem, a generalization of the traditional Max Flow problem from the study of network flows.
The generalization removes some key pieces from the traditional Max-Flow Min-Cut theorem and the algorithms to which it gives rise---in particular, the idea of a flow saturating a cut does not translate nicely into the multicommodity setting.
Our work developed a way to connect the feasible region for a network flow problem to the flows that can fit through all $s$--$t$ cuts in the network.
We have also given some methods to calculate those values for the general problem and some useful sub-problems.
It would be interesting to extend the results in Section \ref{ssec:DisjointMutualCap} on fully disjoint networks to the case of multicommodity flows in \emph{series-parallel digraphs}, which are obtained by composing in an edge-disjoint manner (i.e., by merging the source and sink nodes) the copies of smaller digraphs in series or in parallel fashion. 

Our main result is Theorem \ref{thm:MCMFMC}, in which we showed that the set of local flows over individual cuts that can be ``glued together'' into true flows is precisely the set of feasible flows over the network.
By generalizing max flow to be the feasible region of flow through the network and the min cut to be the set of consistent collections of local flows over cuts, we achieve a generalization of the duality between flows and cuts which the traditional Max-Flow Min-Cut theorem provides.

We also studied a special case of \MCMF that considers maximizing the multiple of a specific ratio of flows transported through the network.
We show how to model the possible rearrangements of flow as a problem on the cycle space of the network.
We show that this method of assigning coefficients exactly captures the set of feasible rearrangements of flows of particular flow values, meaning that a pseudoflow has a feasible realization if and only if it has a feasible set of cycle coefficients.
This gives us a method to check the feasibility of a particular flow value which, in turn, allows us to use a binary search of possible multiples of the desired ratio to determine an arbitrarily precise approximation of the max flow. 

There are several avenues for future research in this area.
Primarily, more efficient methods for finding the mutual capacities for a collection of cuts is necessary for any large-scale implementations of this work.
Even efficient methods for calculating better approximations, e.g., mutual capacities for all collections of three cuts, could give useful heuristics for solving max-flow problems.

Industry applications of multicommodity flows may be solvable by more computationally tractable subproblems.
Finding additional assumptions about network structure (e.g., acyclic networks or limits on degrees of nodes) or capacities (e.g., polygonal regions or convexity) that give more computationally efficient solutions is another potentially fruitful area of research.

\vspace*{-0.12in}
\paragraph{Acknowledgment:}  \label{acknowledgment}
We thank the National Science Foundation for support through grants 1661348 and 1819229.

\vspace*{-0.15in}
\input{MCMFMC.bbltex}

\end{document}